\documentclass[manuscript,screen]{acmart}
\fancyfoot{}

\AtBeginDocument{%
  \providecommand\BibTeX{{%
    \normalfont B\kern-0.5em{\scshape i\kern-0.25em b}\kern-0.8em\TeX}}}

\usepackage{subfigure}
\usepackage{graphicx}
\usepackage{textcomp}
\usepackage{amsthm,amsmath}
\usepackage{mathrsfs}
\usepackage{algorithm}
\usepackage{algorithmic}
\usepackage{caption}
\usepackage{booktabs}
\usepackage{threeparttable}
\usepackage{multirow}
\usepackage{blindtext}
\usepackage{xpatch}

\usepackage{xcolor}

\renewcommand{\algorithmicrequire}{\textbf{Input:}}

\makeatletter
\xpatchcmd{\ps@firstpagestyle}{Manuscript submitted to ACM}{}{\typeout{First patch succeeded}}{\typeout{first patch failed}}
\xpatchcmd{\ps@standardpagestyle}{Manuscript submitted to ACM}{}{\typeout{Second patch succeeded}}{\typeout{Second patch failed}}\@ACM@manuscriptfalse
\makeatother

\settopmatter{printacmref=false} 
\renewcommand\footnotetextcopyrightpermission[1]{} 
\setcopyright{none}
\pagestyle{plain} 

\begin{document}
\title {Uncovering the Local Hidden Community Structure in Social Networks} 

\author{Meng Wang}
\authornote{The first two authors contributed equally.}
\email{mengwang233@hust.edu.cn}
\affiliation{
  \institution{School of Computer Science \& Technology, Huazhong University of Science and Technology}
  \city{Wuhan}
  \state{Hubei}
  \country{China}
}

\author{Boyu Li}
\email{afterslby@hust.edu.cn}
\affiliation{%
  \institution{School of Computer Science \& Technology, Huazhong University of Science and Technology}
  \city{Wuhan}
  \state{Hubei}
  \country{China}
}

\author{Kun He}
\authornote{Corresponding author.}
\email{brooklet60@hust.edu.cn}
\affiliation{%
  \institution{School of Computer Science \& Technology, Huazhong University of Science and Technology}
  \city{Wuhan}
  \state{Hubei}
  \country{China}
}

\author{John E. Hopcroft}
\email{jeh@cs.cornell.edu}
\affiliation{%
  \institution{Department of Computer Science, Cornell University}
  \city{Ithaca}
  \state{NY}
  \country{USA}
}


\renewcommand{\shortauthors}{Wang et al.}

\begin{abstract}
Hidden community is a useful concept proposed recently for social network analysis. Hidden communities indicate some weak communities whose most members also belong to other stronger dominant communities. Dominant communities could form a layer that partitions all the individuals of a network, and hidden communities could form other layer(s) underneath. These layers could be natural structures in the real-world networks like students grouped by major, minor, hometown, etc. 
To handle the rapid growth of network scale, in this work, we explore the detection of hidden communities from the local perspective,  and propose a new method that detects and boosts each layer iteratively on a subgraph sampled from the original network. We first expand the seed set from a single seed node based on our modified local spectral method and detect an initial dominant local community. Then we temporarily remove the members of this community as well as their connections to other nodes, and detect all the neighborhood communities in the remaining subgraph, including some ``broken communities''  that only contain a fraction of members in the original network. 
The local community and neighborhood communities form a dominant layer, and by reducing the edge weights inside these communities, we weaken this layer's structure to reveal the hidden layers. Eventually, we repeat the whole process and all communities containing the seed node can be detected and boosted iteratively. 
We theoretically show that our method can avoid some situations that a broken community and the local community are regarded as one community in the subgraph, leading to the inaccuracy on detection which can be caused by global hidden community detection methods. Extensive experiments show that our method could significantly outperform the state-of-the-art baselines designed for either global hidden community detection or multiple local community detection.    
\end{abstract}

\begin{CCSXML}
<ccs2012>
   <concept>
       <concept_id>10002950.10003624.10003633.10010917</concept_id>
       <concept_desc>Mathematics of computing~Graph algorithms</concept_desc>
       <concept_significance>500</concept_significance>
       </concept>
   <concept>
       <concept_id>10002951.10003227.10003351</concept_id>
       <concept_desc>Information systems~Data mining</concept_desc>
       <concept_significance>500</concept_significance>
       </concept>
 </ccs2012>
\end{CCSXML}

\ccsdesc[500]{Mathematics of computing~Graph algorithms}
\ccsdesc[500]{Information systems~Data mining}

\keywords{Social networks, local community detection, hidden community, local modularity}

\maketitle
\thispagestyle{empty}
\pagestyle{empty}

\section{Introduction}
\newcommand{\etal}{\textit{et al}. }
\newcommand{\ie}{\textit{i}.\textit{e}.}
\newcommand{\eg}{\textit{e}.\textit{g}.}
\newcommand{\etc}{\textit{etc}}
\newcommand{\st}{\textit{s}.\textit{t}. }

Social network analysis has received significant attention from the researchers of various fields for decades. As an important topic for social network analysis, the task of community detection aims to uncover clusters of nodes (individuals in social networks represented by graphs), termed communities, whose interior closeness is much higher than their exterior connections to the remaining nodes.
Uncovering the community structure in social networks not only contributes to revealing the most valuable interactions and behaviors of the individuals, but also helps to understand the overall pattern and features of the networks. 
Technologies for community detection can be classified into two categories, namely global community detection and local community detection. Global community detection finds communities covering all the nodes of the network~\cite{newman2004finding,rosvall2008maps,coscia2012demon, zhu2020community}, while local community detection aims to seek for cohesive group(s) containing a single query seed or a few seed members~\cite{clauset2005finding,andersen2006local,li2015uncovering, luo2020local}.

From the perspective of global community detection, other than the enormous works for disjoint community detection and overlapping community detection, there is another kind of community structure called the hidden community structure, which has attracted increasing interests in the field of social networks in recent years~\cite{young2015shadowing,he2015revealing, he2018hidden,gong2018finding}. 
A community is called the ``hidden community'' if most of its members also belong to other stronger communities as evaluated by some community scoring functions (\eg, modularity~\cite{newman2004finding}, conductance~\cite{shi2000normalized} or cut ratio~\cite{fortunato2010community}). 
Such weak communities are often overlooked by conventional community detection algorithms that tend to detect the dominant communities with relatively strong structures.
Classified by the strength, communities can be organized in ``layers'' where each layer contains a set of communities that partition all the nodes.


We observe that a natural layer of communities can be formed by a partition of the individuals grouped by their attributes in social networks. For instance, consider a real-world network Rice\_Ugrad with 1220 undergraduate students in the Rice University~\cite{mislove2010you}. All students in the network have several attributes and we assign a part of them to the same communities when they have the same value of an attribute and exhibit denser interior connections than the exterior connections. When students are grouped by colleges, we see a partition with nine communities, which yields a high modularity of 0.384. And there is another set of four sizable and three trivial communities when grouping students by matriculation years, whose modularity is 0.259. The two sets of communities form two layers of the network, and the corresponding re-permuted adjacency matrices are illustrated in Figure \ref{fig:Ugrad}. Although each layer has its own prominent structure, it may not contribute to each other's partition, and the connections inside the communities from one layer can even be considered as background noises from the perspective of the other layer. On the one hand, if we want to find students entering the university in the same year, the interference from connections in their colleges must be eliminated. On the other hand, when we try to detect the communities within colleges, though they are comparatively stronger, the effect caused by hidden matriculation-year structures cannot be ignored as well.

\begin{figure}[t]
    \centering
        \subfigure[Grouped by college]{
        \begin{minipage}[t]{0.5\linewidth}
        \centering
        \includegraphics[width=4.5cm]{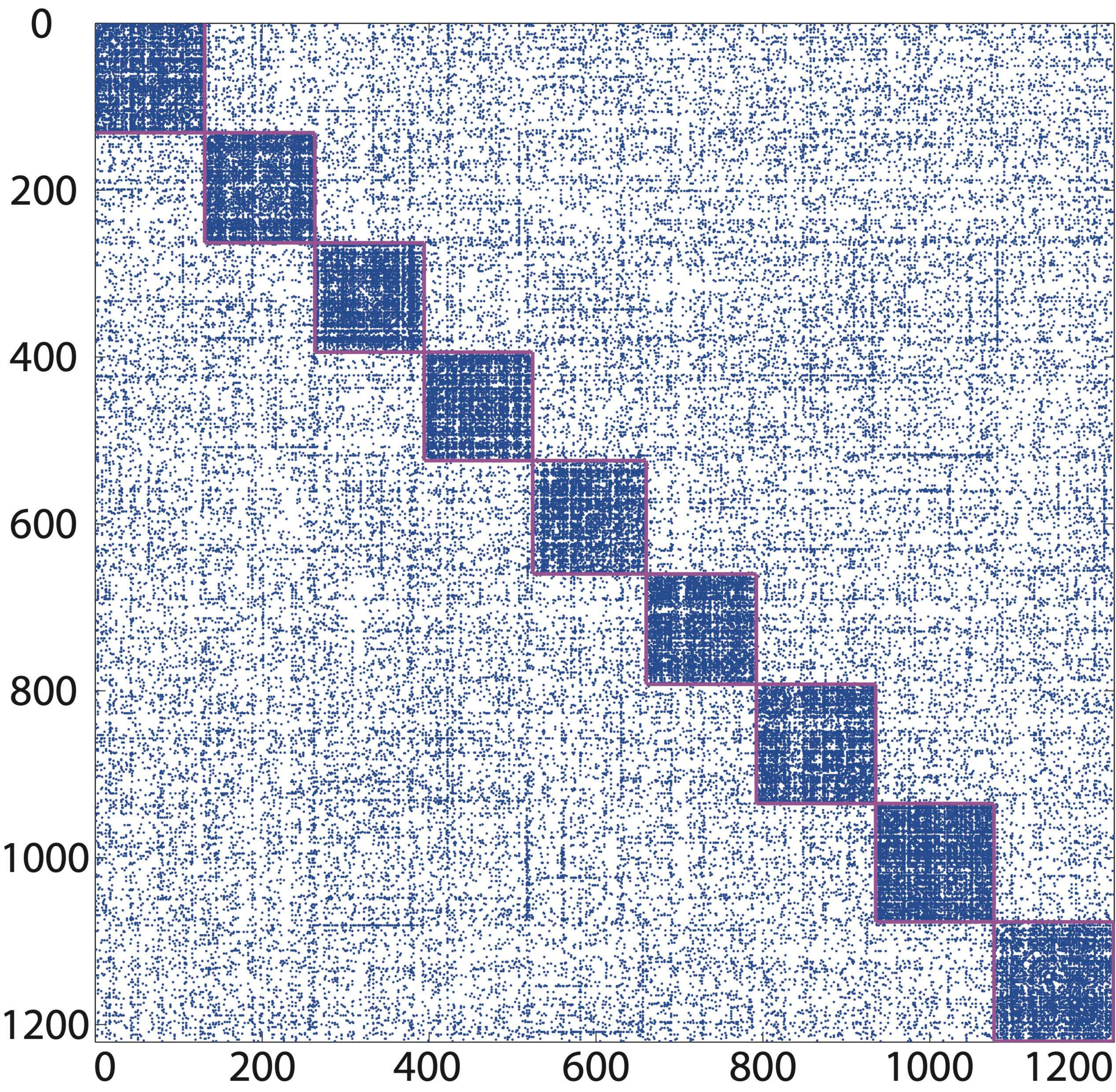}
        \end{minipage}
        }%
        \subfigure[Grouped by matriculation year]{
        \begin{minipage}[t]{0.5\linewidth}
        \centering
        \includegraphics[width=4.5cm]{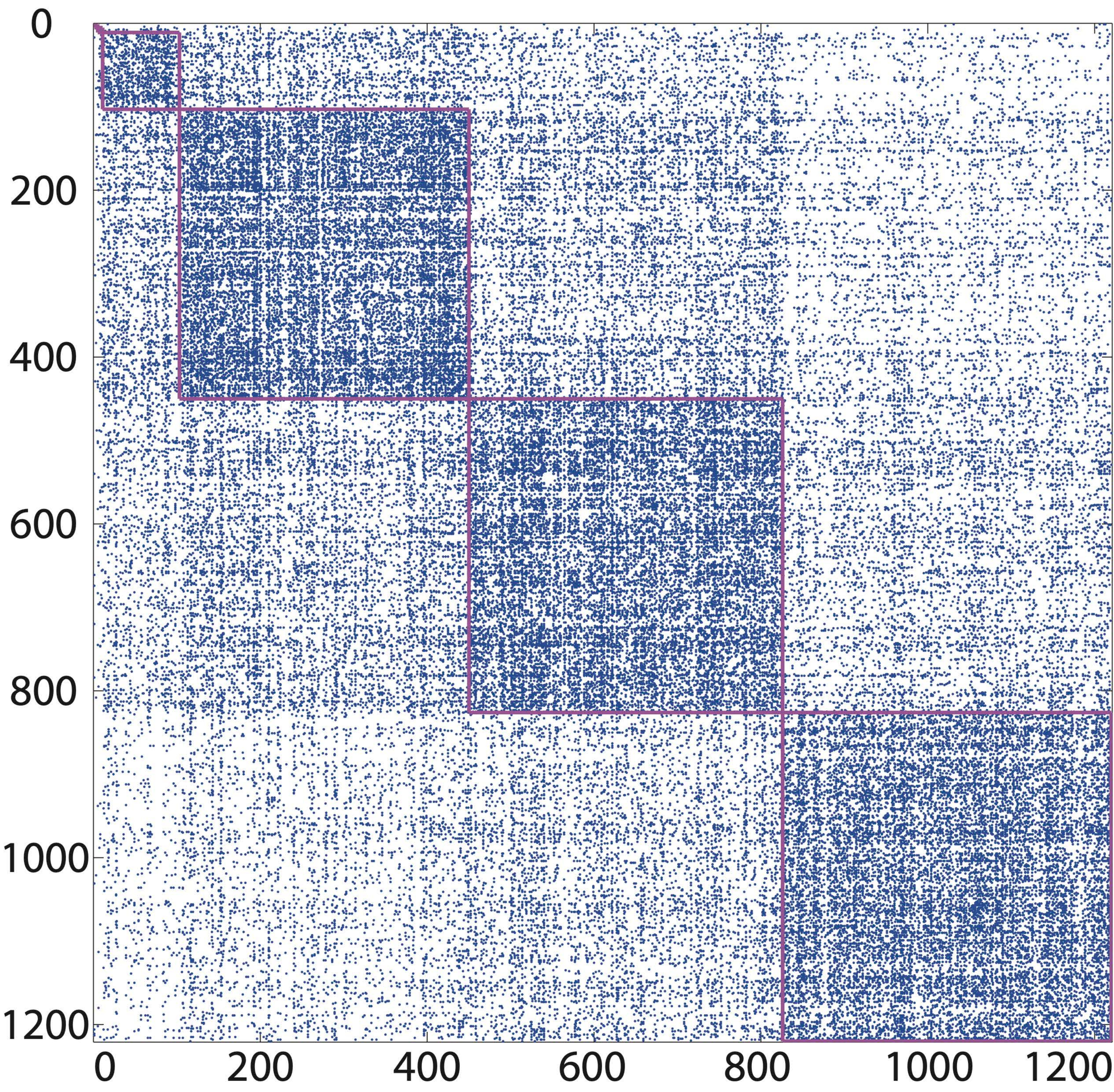}
        \end{minipage}
        }%
    \caption{Two layers of the Rice\_Ugrad network.}
    \label{fig:Ugrad}
\end{figure}


Meanwhile, with the rapid growth of network scale, the global community detection becomes very costly or even unavailable in terms of the computability for huge networks, especially when all we want to obtain are just one or several members' local relations. This leads to a growing hot topic of the local community detection, for which the researchers focus on finding a local community for one single seed node or a few seed nodes. More recently, some researchers started to investigate the multiple local community detection of finding several overlapping local communities that the query node belongs to.
However, to the best of our knowledge, there is no research in detecting the local hidden communities. When addressing the local community detection problem in a network with hidden structure, we also aim to discover several communities containing the seed, but with different degrees of cohesiveness. Instead of having similar strengths and simply overlapping with each other, these communities are distributed into different layers with various strengths. And if we simply apply a multiple local community detection algorithm, the hidden communities covered by stronger ones are hardly to be uncovered.

Assuming an example that the police has captured a suspect, and they want to identify his or her partners in crime. Usually the criminal group has less often contacts than that among families or colleagues, but at the moment the relations within this group are most valuable. Under the circumstance if we apply a conventional local community detection method or even one designed for overlapping situations, we are more likely to keep our eyes on the closest people to the suspect but unable to dig out the exact criminal group due to its weak structure overwhelmed by other stronger structures. This reflects the necessity of addressing the hidden community detection problem.

The problem of detecting hidden structure has been noticed and discussed by a few works in recent years. He \etal~\cite{he2015revealing, he2018hidden} come up with the HICODE algorithm, which detects and weakens a layer to reveal the hidden structures. Gong \etal~\cite{gong2018finding} design an embedding method to implement the similar functionality. However, the methods proposed in these works are all designed for conducting the global community detection. The limitations of these methods, which we mentioned above, significantly affect their efficiency in large networks.
Then, what if we perform a sampling process in the original network and use a global hidden community method in the sampled subgraph?
Although the sampling process is very likely to reserve the target local communities but it would segment a portion of other communities to have incomplete structures. 
Such imbalance would weaken the detection quality of the methods in sampled subgraphs.
Consequently, a local community detection method particularly designed for networks with multiple layers of community structures is absent.

In this work, we propose an effective algorithm called the Iterative Reduction Local Spectral (IRLS) method to uncover the multiple local communities of a query node in different layers, including dominant and hidden local communities.
To the best of our knowledge, our paper is the first work to address the hidden communities from the local perspective. 
For the single-layer local community detection, we add an adaptive augmentation method for the seed set to a local spectral algorithm, and set the truncating rule by the weighted local modularity when the actual size of the target community is unknown. Then the modified local spectral algorithm is used in the sampled subgraph to uncover the group containing the seed in the dominant layer. The members of the detected local community and all connections related to them are removed from the network temporarily for the detection of the remaining communities of the current layer. 
Then we reduce the edge weights inside the discovered communities of the layer to decrease their cohesiveness to be the same with the surroundings. In this way, we break the dominant layer's structure and the hidden layer emerges, which offers the chance to discover another type of relations of the seed node. Such steps are performed iteratively until the local communities are detected. Experiments on both synthetic and real-world datasets verify the effectiveness of our method.

Our main contributions of this work include:
\begin{itemize}
    \item We introduce a new problem called the local hidden community detection, whose purpose is to uncover the dominant local community as well as the hidden local communities for a single query seed on networks with multiple layers of community structures. 
    \item We propose an Iterative Reduction Local Spectral (IRLS) algorithm for this problem, in which we seek for a partition of the remaining network after the local community is uncovered, and then weaken all the communities of a layer to promote the detection on other layers.
    \item When detecting the local communities, we design a modified local spectral method as a subroutine, presenting an elaborate seed set augmentation process and a new community truncation rule.
    \item Extensive experiments on six synthetic datasets and six real-world datasets demonstrate the advantages of the proposed algorithm, which outperforms the referenced methods for multiple local community detection or global hidden community detection.
\end{itemize}

\section{Related Work}
To our knowledge, there is few literature addressing the local hidden community detection problem. However, several related research directions have already been studied for years. In this section, we briefly introduce three kinds of community detection that share some commonalities with the problem we present in this work.

\subsection{Multiple Local Community Detection}

Local community detection is a hot topic for social network analysis, whose purpose is to find a community supervised by a small seed set. Various methods have been proposed to handle this problem, including methods seeking for cohesiveness-optimization~\cite{yang2015defining,barbieri2015efficient,clauset2005finding}, local spectral methods~\cite{mahoney2012local,li2015uncovering,he2019krylov} and diffusion based methods~\cite{andersen2006local,kloster2014heat,bian2017many}.

Recently, some researchers began to address the problem of multiple local community detection, in which a seed belongs to a number of overlapping communities, and the goal is to uncover all these local communities. He \etal ~\cite{he2015detecting} first address this problem by proposing the M-LOSP algorithm that removes the seed from its ego network and regards the connected components of the ego network plus the original seed as new seed sets to find different local communities. Ni \etal ~\cite{ni2019local} propose an algorithm that selects seed sets from candidates near the seed and conducts local community detection separately to uncover local communities. S-MLC~\cite{kamuhanda2020sparse} applies a sparse non-negative matrix factorization to estimate the number of communities in the sampled subgraph and assign nodes to different communities. Although these algorithms can discover multiple local communities supervised by the given seed, they 
do not consider the existence of hidden structures. 
From a deeper perspective, our algorithm takes into consideration the different types of relations in the network and uncover multiple local communities of various strengths.

\subsection{Hidden Community Detection}

The phenomenon of different community strengths and similar concepts to the layer in this paper are beginning to get noticed, for which some scholars have already designed a series of algorithms. Yang \etal ~\cite{young2015shadowing} discuss the fact that denser communities can overshadow sparser ones within a network when they have different resolutions. They conduct a cascading process, removing all the interior edges of the discovered communities to identify new communities. He \etal \cite{he2015revealing, he2018hidden} formally define the problem of ``hidden community detection'' and propose the HICODE algorithm to uncover the communities with different strengths. 
The HICODE algorithm takes into consideration the effect from hidden layers to dominant layers. Gong \etal ~\cite{gong2018finding} term the layers of a network ``multi-granularity'', and use embedding methods to discover and weaken the community structures. Although these works detect communities separately based on different layers, they concentrate on obtaining a global partition. To our knowledge, few work is presented to handle the hidden communities from the local perspective.

\subsection{Community Detection on Multilayer Networks}
There have also been works focusing on multilayer networks and address various kinds of tasks on them~\cite{tang2009uncoverning,chen2019tensor,interdonato2017local}. 
In multilayer networks, a layer contains a part of nodes and their interactions from the original network. Different layers are divided based on the interaction type, and there also exist inter-layer edges~\cite{kivela2014multilayer,huang2021survey}. To solve the community detection problem on multilayer networks, this series of work usually gets one partition of the network or communities with slight degree of overlapping after analysing different layers jointly. Note that although these researches also divide the network structure into several layers, the definition of layer is totally different from that in this paper. The ``layer'' we set is a partition of communities including all members in the network. Different layers correspond to different community division, and they have no relationship with each other. 
While the algorithms designed for multilayer networks do not consider the situation that members in the network may simultaneously belong to several groups with varying scales and cohesiveness, and should be detected separately.

\section{Preliminaries}
In this section, we first provide the problem definition, and then briefly introduce the local special (LOSP)~\cite{he2015detecting} method and a community structure weaken strategy called ReduceWeight~\cite{he2018hidden}.

\subsection{Problem Definition}
Let an undirected graph $G=(V, E, W)$ represent a network with $n=|V|$ nodes and $m=|E|$ edges. $W = \{w_{uv}|(u,v) \in E\}$ denotes the set of the weights for edges. Note that an unweighted network could be transformed into a weighted network by setting all the edge weights to be 1.
A layer $L$ of the network corresponds to a partition, which divides the network into a set of non-overlapping communities $L = \{C_h|h=1,2,\cdots\}$ covering all the nodes, \ie, $\forall v \in V$, ${\exists}C_h \in L$ \st $v \in C_h$. For a network containing more than one layers, we measure their strengths by calculating the community scoring function (\eg, modularity~\cite{newman2004finding}, conductance~\cite{shi2000normalized} or cut ratio~\cite{fortunato2010community}) of each partition, and consider the layer with comparatively highest score as the dominant layer and the others the hidden layers. 

Then, the local hidden community detection problem is defined as follows.

\textbf{Definition 1 (Local Hidden Community Detection).}  Given a graph $G$ including $n_L$ layers and a seed node $s$, the local hidden community detection problem aims to uncover the communities $\mathcal{C} = \{C^i|i=1,2,\cdots, n_L\}$,
where $C^i$ is a community in the ${i}^{th}$ layer and $s \in C^i$ .

Note that although we term the problem ``local hidden community detection'', we focus on not only the communities containing the seed node in the hidden layers, but the communities in all layers, including the dominant one. The community in the dominant layer may have higher cohesiveness, but it is also interfered by a small number of interactions from the hidden layers, which affects the accurate detection of the dominant structure.

\subsection{The Local Spectral Method} \label{LOSP}
LOSP~\cite{he2015detecting,he2019krylov} is an effective method for local community detection based on the local spectral subspace. For a seed set $S$, LOSP performs a breadth-first-search (BFS) to sample a subgraph $G_S=(V_S, E_S, W_S)$ that covers the nodes closing to the seed members. To avoid fast expansion, LOSP removes the nodes with low inward ratio (the fraction of interior edges in the subgraph to the out-degree). If the number of sampled nodes exceeds a given threshold, a short random walk is conducted to filter the nodes to guarantee a proper size of the subgraph.

In the subgraph, LOSP performs a light lazy random walk to approximate the eigenvalue decomposition of the transition matrix and acquires a set of local spectral basis $\mathbf{V}_d^{(k)}$. Then, LOSP detects the local community by solving a linear programming problem: 
\begin{equation}
\begin{aligned}
&\text{min} \  \ ||\mathbf{y}||_1 = \mathbf{e}^{\text{T}} \mathbf{y} \\
\st \ &
(1)~~\mathbf{y} = \mathbf{V}_d^{(k)} \mathbf{u}, \\
&(2)~~\mathbf{y} \geq 0, \\
&(3)~~y_i \geq \frac{1}{|S|}, ~~i \in S. \\
\end{aligned}
\label{eq:LP}
\end{equation}
and finds a vector $\mathbf{y}$ in the local spectral subspace, in which $y_i$ indicates the likelihood of node $i$ belonging to the same community with the seed set. Finally, LOSP sorts all nodes in the subgraph according to their corresponding values in $\mathbf{y}$ in the descending order, and selects the top $|C_0|$ nodes as the output if the size of the target community $C_0$ is known. Otherwise, it automatically truncates the community based on the conductance score~\cite{shi2000normalized}.

He \etal \cite{he2019krylov} present some variants of the LOSP algorithm, including different types of random walks and another definition of local spectral subspace using the normalized adjacency matrix. For brevity, we only introduce the model used in our method, which has the best performance in experiments.

\subsection{The ReduceWeight Strategy} \label{reduce}
ReduceWeight is an effective strategy to reduce the connectivity inside communities as utilized in HICODE~\cite{he2018hidden}. 
Assuming each layer of the network is a separate stochastic blockmodel, in which every community is a block with its own edge density. The edges generated from other layers can be considered as background noises in the current model, ReduceWeight reduces the edge weights within the communities to weaken the densities of communities to the same degree with noises.

The interior density $p$ of a community $C$ with $n_C$ nodes is calculated from the weight sum of all edges in $C$ divided by the maximum possible number of edges (with weight 1), \ie,
\begin{equation}
	p = \frac{w_{C_{in}}}{\frac{1}{2} n_C(n_C-1)},
	\label{eq:p}
\end{equation}
where $w_{C_{in}}$ denotes the weight sum of the edges within $C$. Similarly, the background noise density $q$ is the ratio of actual exterior connection intensity to that of a full-connected graph:
\begin{equation}
	q = \frac{w_C - 2w_{C_{in}}}{n_C(n-n_C)},
	\label{eq:q}
\end{equation}
where $w_C$ is the weight sum of the edges which have at least one endpoint inside $C$. And the edge weights in the community can be reduced by multiplying the ratio of the two densities, such that
\begin{equation}
	w_{uv} = w_{uv} \cdot \frac{q}{p}, \  \ (u,v) \in E_C.
	\label{eq:weaken}
\end{equation}

For the reduced network with new weights, all the blocks disappear in the current blockmodel. In the meantime, the community structures in the hidden layer emerge. 

\section{The Proposed Algorithm}
\label{sec:method}
In this section, we introduce the proposed Iterative Reduction Local Spectral (IRLS) algorithm in detail. 

\begin{figure}
    \centering
    \includegraphics[width=0.6\linewidth]{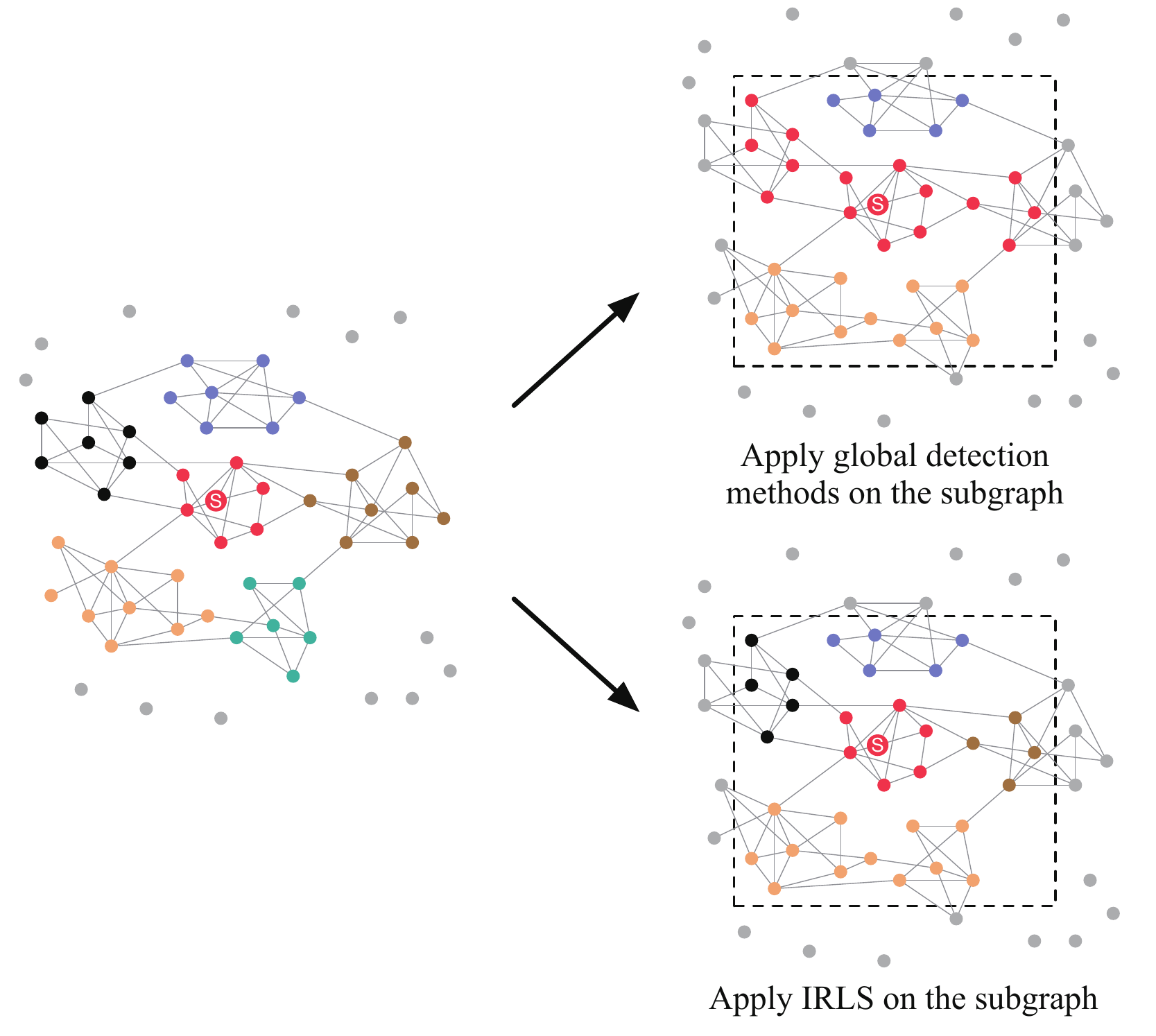}
    \caption{Different community detection results of one layer in the subgraph using two kinds of methods. Nodes of the same color (except for grey nodes abandoned in the sampling process) belong to the same community.}
    \label{fig:diff}
\end{figure}

\subsection{General Idea}

We first apply a sampling operation based on BFS if the network is very large, and subsequent processes are all executed iteratively in the sampled subgraph. The sampling process gives priority to reserve the local communities containing the seed, but in terms of other communities surrounding them, they may not be intact in the subgraph. And these ``broken'' communities (a community is broken when a portion of its members is reserved in the subgraph and another portion is abandoned) can affect the overall structure of the network. Under this circumstance, when we apply some global detection methods in the subgraph, these communities may not be identified separately and be ``absorded'' by other larger communities, badly interfering the detection results, as shown in Figure \ref{fig:diff}.

To avoid this situation as much as possible and guarantee the detection accuracy of the most valuable local communities, in our algorithm we divide the detection operation of each layer into two stages to detect the local community and the neighborhood communities in turn. At each iteration, we adopt the LOSP algorithm with a seed set augmentation to find a local community containing the seed node. Then we temporarily remove all the node members of this community as well as the edges connecting with them and detect other communities in the remaining subgraph. Now we merge the detected communities as a partition of the subgraph. Following the ReduceWeight strategy, the connections inside the communities in a layer are weakened so that the current layer's structure is weaker than others, and we could repeat the detection process to explore another layer again.
In this way, the local communities containing the seed node are uncovered in all layers.

\subsection{Subgraph Sampling}
For some networks in large scale, it takes high computation overhead and space cost to load data and perform algorithms on the networks. Furthermore, not the overall network is required when we attempt to address problems from the local perspective. In the local community detection, the members belonging to the target communities are usually at a short distance from the seed in the graph, which means the paths connecting the seed with them consist of only one or several edges. Therefore, we use the sampling process described in Section \ref{LOSP} when a network is very large, so as to reserve the nodes and edges that are helpful for detecting and abandon the irrelevant information. Note that during our iterative algorithm, while we perform the detection process for multiple times, the sampling is applied only once at the beginning, and the remaining operations are all conducted in the sampled subgraph.

\subsection{Local Community Detection} \label{local}

To find out the community containing the seed node in each layer, we design a modified local spectral algorithm, which makes two alterations to the existing LOSP to make it more suitable for the local hidden community detection problem, \ie, the community boundary determination and the process of seed set augmentation. 
Details of the modified LOSP are presented in Algorithm \ref{alg:LOSP}.

\begin{algorithm}
\begin{algorithmic}[1]
\caption{The Modified Local Spectral Method}
\label{alg:LOSP}
\REQUIRE Graph $G$, seed node $s$.
\ENSURE Local community $C_0$.
\renewcommand{\algorithmicrequire}{\textbf{Parameters:}}
\REQUIRE Maximal size of seed set $n_{\text{set}}$, maximal size of community $n_{\text{com}}$, value threshold $\beta$, ratio threshold $\gamma$.

\STATE $G_{S} \leftarrow \text{advanced BFS sampling}(G)$
\STATE $S \leftarrow \{s\}$
\WHILE{$|S| \leq n_{\text{set}}$}
    \STATE obtain $\mathbf{y}$ by solving Eq. \eqref{eq:LP} and sort $\mathbf{y}$ in the descending order, $\mathbf{v}$ is the corresponding node sequence
    \IF{$\exists \; j,k \in \{y_1,...,y_{|S|}\}$, \st $y_j \geq 2y_k$}
        \STATE \textbf{break}
    \ENDIF
    \FOR{$j=1:2*n_{\text{set}}$}
        \IF{$y_j \geq \beta \text{ or } y_j/y_{j+1} \geq \gamma$}
            \STATE $S \leftarrow$ \{$v_1,...,v_j$\}
        \ENDIF
    \ENDFOR
    \IF{$|C_0|$ is known}
        \STATE $C_0 \leftarrow \{v_1,...,v_{|C_0|}\}$
        \ELSE \FOR{$j=1:n_{\text{com}}$}
        \STATE $C_j \leftarrow \{v_1,...,v_j\}$
        \STATE calculate $Q(C_j)$ by Eq. \eqref{eq:modu}
        \ENDFOR
        \STATE $C_0 \leftarrow C_{j^*} \text{ with maximal } Q(C_{j^*})$
    \ENDIF
\ENDWHILE

\end{algorithmic}
\end{algorithm}

\subsubsection{Community Boundary Determination} \label{comsize}
When the actual size of the target community $|C_0|$ is unknown, LOSP seeks for a local minimum of conductance score to decide the community boundary. For a node set $C$ and its complement set $\overline{C} =  V \setminus C$ in $G$, the conductance of $C$ is defined by:
\begin{equation}
    \phi(C) = \frac{\text{cut}(C, \overline{C})}{\text{min}\{d_C, d_{\overline{C}}\}},
\end{equation}
where cut(·, ·) denotes the number of edges between two node sets, and $d$ denotes the total degrees of all nodes in a node set.
For index $j$ of the sorted node sequence $\mathbf{v}$ in LOSP, the conductance of the first $j$ nodes $\phi(C_j)$ is calculated, starting from an index that the node set ahead contains all the seeds. When $\phi(C_{j^*})$ reaches a local minimum and then keeps increasing until it surpasses 1.02$\phi(C_{j^*})$, the first $j^*$ nodes are considered as the detected community. However, in the process of calculating the conductance, we need to count the total degrees for both node sets and the edges in their junction whenever we add one node to the community, which is costly in computation. And this function is not applicable to weighted networks. In order to reduce the time complexity and improve the algorithm's 
generality, we introduce the weighted local modularity as the scoring function.

\textbf{Definition 2 (Weighted Local Modularity).} For a node set $C$, the weighted local modularity $Q(C)$ is defined by:
\begin{equation}
    Q(C) = \frac{\frac{w_{Cin}}{w_G} - (\frac{w_C}{2w_G})^2}{n_C},
    \label{eq:modu}
\end{equation}
where $w_G$ denotes the weight sum of edges in the graph. 

When calculating the weighted local modularity, we only need to focus on the community itself without its complement set, omitting a large amount of computation. With this improvement, we are able to calculate $Q(C)$ from the starting index to the ceiling size of a community and pick the global optimum, rather than stopping at the point with a local minimal conductance. Additionally, the Louvain method~\cite{blondel2008fast} is used to detect other communities in the remaining subgraph in Section \ref{othercom}, which is based on the modularity. Using weighted local modularity can also ensure the compatibility for the whole algorithm.

\subsubsection{Seed Set Augmentation}
For the basic local community detection, a comparatively larger seed set usually leads to a more accurate detection. In the problem proposed in this paper, starting from only one single seed node is not convincing to discover the complete community structure. As the linear programming solution $\mathbf{y}$ indicates the nodes' probabilities that they belong to the target community, we can use the values of this vector to pick nodes for expanding the seed set. A naive method is presented in~\cite{he2015detecting}, which takes a periodic increase in the size of the seed set, selecting new seeds from the top of $\mathbf{v}$ simply by a fixed number at each iteration, and uses the expanded seed set as a new input to perform the algorithm again. 
However, when adding new nodes to the seed set, we do not know to what extent a node is likely to be in the same community as the current seeds, and if a node from another community is chosen, the results of detection will be affected and decay. 

To address this issue, we set two conditions to evaluate potential seed nodes. First, if a node $v_j$'s corresponding value in $\mathbf{y}$ is larger than a threshold $\beta$, we consider that this node has a high likelihood to belong to the target community, and it can be regarded as a seed. Second, if $v_j$'s value is $\gamma$ times that of the next node in $\mathbf{v}$, we take them as the possible boundary of two communities, and the top nodes up to $v_j$ can all be qualified as members in the new seed set. We test with several values of $\beta$ and $\gamma$ and eventually set $\beta = 0.3, \gamma = 1.05$ for all the datasets in experiments. The conditions improve the quality of new seeds, though they still cannot ensure an absolutely correct augmentation process. Therefore, we also present a ``revocation mechanism''. During an iteration with a newly-expanded seed set, if in the recalculated $\mathbf{y}$ 
the lowest value is no greater than a half of the highest value, indicating the values vary significantly, then there may exist some wrong nodes from other communities that should not be included in the seed set. Then we stop the algorithm and use the community detected in the previous iteration as the output.

We set an upper bound of the seed set size, and the augmentation process terminates when the seed set reaches the upper bound. In Section \ref{exp}, we show some experiments and observations about different values about the maximal seed set size.

\subsection{Neighborhood Community Detection} \label{othercom}
In a social network with structures of different cohesiveness, an individual usually has several kinds of relationships correspondingly. The interactions from the dominant layer are usually stronger, and we can consider that the hidden layers are ``covered'' by the dominant one. If we want to discover the hidden communities, the interference from the stronger communities must be eliminated. Most of the time, as the communities from different layers are not highly overlapping with each other, parts of a few dominant communities may form the cover over a weaker one together. As shown in Figure \ref{fig:network}, The members of a community in the hidden layer are divided into parts and belong to four dominant communities, which have to be detected and weakened if we want to uncover the hidden group. In Section \ref{local}, we detect one community containing the seed node in one layer (for the original network, the first time of detection always tends to discover the community in the dominant layer), but this is not enough to reveal the hidden structure. 

Therefore, apart from the local community, we also detect other communities for the remaining part of the network. Louvain method~\cite{blondel2008fast} is a popular global community detection method based on modularity maximization. After obtaining the output $C_0$ in Section \ref{local}, we remove the members in the community as well as all edges connecting with them from the subgraph, and apply Louvain method on the remaining subgraph to uncover the neighborhood communities $C_1,C_2,...,C_h$. Then the communities detected by both steps compose the partition of the current layer in the subgraph, \ie, $\mathcal{C}_L = \{C_i | i=0,1,...,h\}$.

\begin{figure}
    \centering
    \includegraphics[width=0.3\linewidth]{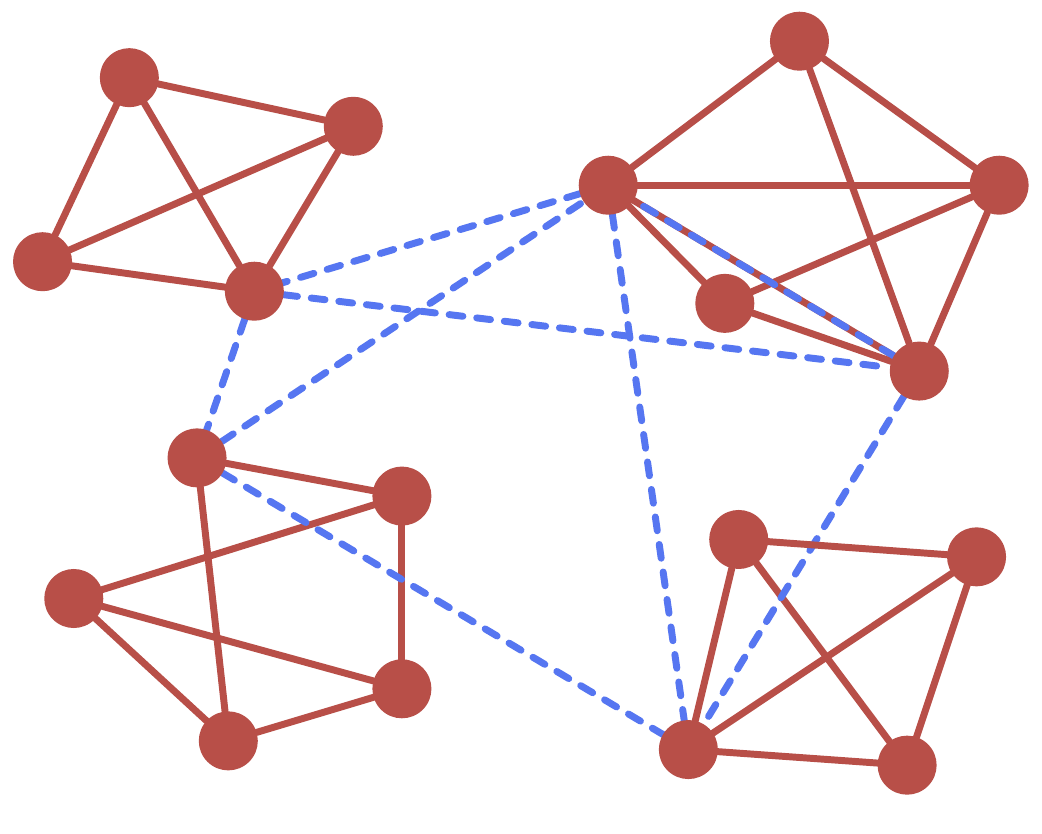}
    \caption{An example of a network with two layers. There are four communities represented by red solid lines in the dominant layer and one community represented by blue dashed lines in the hidden layer.}
    \label{fig:network}
\end{figure}

Note that although the obtained partition seems to have the same form as generated directly by a global community detection algorithm, the results of the two ways usually differ in quality, especially when the network does not have strong community structures, or the sampling process breaks a number of communities. Instead of the accurate detection of 
all communities, 
we are most concerned about the detection accuracy of the community containing the seed, to which our method gives priority, while global algorithms focus more on a balanced partition with an overall optimal performance.

Consider the example shown in Figure \ref{fig:diff}, several communities are broken after sampling and the nodes in the subgraph also have some connections outside their group. If we use a global detection method, the reserved nodes may be misjudged and clustered into other communities, including the target local community. And if we begin with the seed to discover the local community, we can at least guarantee this step of detection leads to the best outcome. The individuals who lose their partners can also be determined correctly in this way, and even if they are contained in other groups, the little error in the weakening stage affects the detection accuracy much less than the incorrect detection itself. This example explains the superiority of our algorithm over the global hidden community detection methods. In Section \ref{Theory}, we will further discuss the difference theoretically.

\begin{algorithm}
\begin{algorithmic}[1]
\caption{The IRLS Algorithm}
\label{alg:whole}
\REQUIRE Graph $G$, seed node $s$, number of layers $n_L$.
\ENSURE Local communities $\mathcal{C}_0$.
\renewcommand{\algorithmicrequire}{\textbf{Parameters:}}
\REQUIRE Number of iterations $T$.

\STATE sample $G_{S}=(V_S, E_S, W_S)$ by line 1 in \textbf{Algorithm \ref{alg:LOSP}}
\FOR{$t = 1 : T$}
    \STATE $G_{0}=(V_0, E_0, W_0) \leftarrow G_{S}=(V_S, E_S, W_S)$ 
    \FOR{$i = 1 : n_L$}
	  \item
        $L_i \leftarrow \emptyset$
    \ENDFOR
    \FOR{$i = 1 : n_L$}
        \FOR{$j = 1 : n_L, j \neq i$}
            \IF{$L_j \neq \emptyset$}
                \STATE weaken each community in $L_j$ by Eq. \eqref{eq:p} - \eqref{eq:weaken}
            \ENDIF
        \ENDFOR
        \STATE detect $C_0^i$ by line 2 - 22 in \textbf{Algorithm \ref{alg:LOSP}}
    \STATE $V'_0 \leftarrow V_0 \setminus C_0^i$, $G'_0 \text{ is induced from } V'_0$
    \STATE $\{C_1^i,...,C_h^i\} \leftarrow \text{Louvain}(G'_0)$
    \STATE $L_i \leftarrow \{C_0^i,...,C_h^i\}$
    \ENDFOR
\STATE $\mathcal{C}_0 \leftarrow \{C_0^1, C_0^2,...,C_0^{n_L}\}$
\ENDFOR
\end{algorithmic}
\end{algorithm}

\subsection{Layer Reduction and Community Refinement}
After all communities in the dominant layer are detected, we calculate the interior and exterior densities of each community and use the ReduceWeight method to weaken them and make the underneath hidden structure evident. Then the operations above are repeated to discover a new partition.

While the communities in the dominant layer cover that in other layers, the structures of the latter have an impact on the former as well. Therefore, after discovering the hidden communities, we also weaken them for the dominant communities' detection, and apply the algorithm iteratively to refine the results of all layers. For each iteration (the subgraph and seed set keeps original at the beginning of each iteration), if the partitions of the other layers are already known, we weaken their structures, then apply the operation in Section \ref{local} and Section \ref{othercom} to detect the local community and neighborhood communities. And the detection result of the current layer is updated. Eventually, the refined communities containing the seed in all dominant and hidden layers are detected after a number of iterations. The details of IRLS are presented in Algorithm \ref{alg:whole}. 

\subsection{Theoretical Analysis} \label{Theory}
In this subsection, we demonstrate the necessity of using the IRLS method to solve the local hidden community detection problem from the theoretical perspective. We prove that 
under certain conditions, the detection of the dominant local community using global hidden community detection algorithms can be interfered, while the IRLS can avoid such inaccuracy. And this also contributes to the detection of the hidden layer underneath.

The following proof is given on a two-layer network, where each layer is a stochastic blockmodel, and layer 1 and layer 2 are the dominant layer and the hidden layer, respectively. For simplicity, we make the assumption that every community (corresponding to a block) in one layer has the same size and same probability of forming edges with weight 1, and the edges can only be generated between two nodes from the same community including self-loops. For each community in layer $l$, $n_l$, $e_{lin}$ and $e_{lout}$ denote the number of nodes and edges inside and edges that have one endpoint inside, respectively.

As mentioned in Section \ref{othercom}, when we execute the sampling process based on BFS from the seed node, the primary goal is to reserve the complete local community containing the seed in the dominant layer, and we cannot prioritize the  surrounding communities. Some of these neighborhood communities may be located on the boundary of the subgraph, and a number of their members are abandoned through the sampling. These left portions increase the difficulty of the detection. In Theorem \ref{thm1}, we derive the conditions that a broken neighborhood community (the situation about multiple neighborhood communities can be calculated in a similar way) is merged into the intact local community by the representative global hidden community detection method HICODE, while IRLS can distinguish them. We use $C_1$ and $C_2$ to denote the local community and the neighborhood community. $n_{C_i}$ is the number of nodes in the community $C_i$. Assume that $t$ proportion of $C_2$ has been reserved in the subgraph, \ie, $n_{C_2} = tn_{C_1}$ \footnote{For simplicity, we make the assumption that the edges inside and outside the communities are uniformly distributed with the nodes.}. And $e$ and $e_{12}$ denote the number of edges in the whole subgraph and edges between $C_1$ and $C_2$ (one endpoint belongs to $C_1$ and one endpoint belongs to $C_2$), respectively.

\newtheorem{thm}{\bf Theorem}
\begin{thm}
If $t$ proportion of $C_2$ is reserved in the subgraph. When $t$ satisfies the following two conditions, HICODE would consider $C_1$ and $C_2$ as one community while IRLS would detect them separately, as: 
\begin{equation}
    (2e_{1in}^2 + e_{1in}e_{1out})t + \left(e_{1in}e_{1out}+\frac{1}{2}e_{1out}^2-\frac{n_1ee_{1out}}{n-n_1}\right) < 0,
\label{eq:inequ1}
\end{equation}
\begin{equation}
    e_{1in}^2t^3 + (e_{1in}e_{1out}-ee_{1in})t^2 + \left(2e_{1in}^2+e_{1in}e_{1out} + \frac{1}{4}e_{1out}^2+ee_{1out}\right)t + \left(\frac{1}{4}e_{1out}^2-e_{1in}^2-\frac{n_1ee_{1out}}{n-n_1}\right) > 0.
\label{eq:inequ2}
\end{equation}
\label{thm1}
\end{thm}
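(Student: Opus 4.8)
The plan is to read the two displayed inequalities as the two halves of one dichotomy: the first inequality is exactly the condition under which the global, modularity-driven HICODE prefers to glue $C_1$ and $C_2$ together, and the second is exactly the condition under which the weighted-local-modularity truncation inside IRLS prefers to stop at $C_1$ and leave the broken community out. So I would establish the two inequalities by computing the two relevant objective-function comparisons separately and simplifying each into the stated polynomial in $t$.

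First I would fix the edge statistics of the sampled subgraph under the blockmodel-with-uniform-edges assumption. The intact community $C_1$ contributes $e_{1in}$ interior edges and has volume (degree sum) $2e_{1in}+e_{1out}$. For the broken community, since only a $t$ fraction of its nodes survive and edges are spread uniformly over nodes, an interior edge survives only when both endpoints survive, giving $\approx t^2 e_{1in}$ interior edges, whereas a boundary edge survives when its single inside-endpoint survives, giving $\approx t\,e_{1out}$ boundary edges; hence the retained part of $C_2$ has volume $2t^2 e_{1in}+t\,e_{1out}$. Finally, distributing the $e_{1out}$ boundary edges of $C_1$ uniformly over the $n-n_1$ outside nodes, of which $t n_1$ now lie in $C_2$, yields the cross term $e_{12}=\frac{t n_1 e_{1out}}{n-n_1}$. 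These estimates are the entire modeling content of the proof; everything after is algebra.

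For the HICODE half I would use the standard fact that merging two candidate communities changes the global modularity by $\Delta Q=\frac{e_{12}}{e}-\frac{k_{C_1}k_{C_2}}{2e^2}$, with $k_{C_1}=2e_{1in}+e_{1out}$ and $k_{C_2}=2t^2 e_{1in}+t\,e_{1out}$, so that HICODE merges the two communities exactly when $\Delta Q>0$. Substituting the statistics above, multiplying through by $2e^2$, and cancelling the common positive factor $2t$, the result rearranges directly into the first displayed inequality. For the IRLS half I would use that the modified local spectral routine truncates at the size maximizing the weighted local modularity $Q(\cdot)$ of Definition 2, so IRLS returns $C_1$ alone rather than absorbing the broken community precisely when $Q(C_1)>Q(C_1\cup C_2)$. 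Evaluating Definition 2 on $C_1$ (interior $e_{1in}$, volume $2e_{1in}+e_{1out}$, $n_1$ nodes) and on the union (interior $e_{1in}+t^2 e_{1in}+e_{12}$, volume $(2e_{1in}+e_{1out})+(2t^2 e_{1in}+t\,e_{1out})$, and $(1+t)n_1$ nodes), then clearing the denominators by multiplying by the positive quantity $4e^2(1+t)n_1$, cancelling a common factor of $t$, and collecting powers of $t$, I would arrive at the cubic second displayed inequality.

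The hard part will be the second reduction rather than the first: the weighted local modularity carries the $1/n_C$ factor while the union has $(1+t)n_1$ nodes, so clearing denominators couples the $(1+t)$ factor with the quadratic null-model term $(w_C/2w_G)^2$ and generates high-degree terms in $t$ that must collapse to the stated cubic. This demands careful bookkeeping of every $e\,e_{1in}$ and $e\,e_{1out}$ contribution and of the exact scaling ($t$ versus $t^2$) assigned to boundary versus interior edges of the broken community. Once both reductions are in place, imposing the two inequalities simultaneously carves out a regime of $t$ in which HICODE strictly gains from the merge while the local truncation strictly prefers $C_1$, \ie HICODE reports one community and IRLS reports two; and, as remarked after the statement, this correct separation of $C_1$ is exactly what afterwards lets the Louvain step recover the broken community and, after ReduceWeight, expose the hidden layer.
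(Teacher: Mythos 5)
Your proposal is correct and follows essentially the same route as the paper's own proof: the same uniform-edge statistics ($e_{C_2in}=t^2e_{C_1in}$, $e_{C_2out}=te_{C_1out}$, $e_{12}=\frac{tn_1}{n-n_1}e_{1out}$), the HICODE half obtained by comparing $Q'(C_{12})$ with $Q'(C_1)+Q'(C_2)$ — your merge-gain formula $\frac{e_{12}}{e}-\frac{d_{C_1}d_{C_2}}{2e^2}$ is exactly what that difference collapses to, so it is a tidier derivation of the same inequality rather than a different argument — and the IRLS half obtained from $Q(C_1)>Q(C_{12})$ under Definition 2, followed by substituting $e_{12}$ and dividing by $t$. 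One caution: carrying out your algebra faithfully places the $e\,e_{1in}$ contributions as $-e\,e_{1in}$ in the linear coefficient and $+e\,e_{1in}$ in the constant term of the cubic (in agreement with the paper's intermediate inequality before substitution), whereas the theorem as printed puts $-e\,e_{1in}$ in the $t^2$ coefficient and $+e\,e_{1out}$ in the $t$ coefficient, so your correctly derived cubic will differ slightly from the printed one — a transcription slip on the paper's side, not a gap in your plan.
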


\begin{proof}
For the convenience during the calculation of modularity, we present this proof on an unweighted graph and adapt the weighted local modularity to:
\begin{equation}
    Q(C_i) = \frac{\frac{e_{C_iin}}{e} - \left(\frac{d_{C_i}}{2e}\right)^2}{n_{C_i}},
    \label{eq:modu1}
\end{equation}
where $d_{C_i} = 2e_{C_iin} + e_{C_iout}$,
and $e_{C_iin} \text{ and } e_{C_iout}$ denote the numbers of internal and external edges of community $C_i$.
When $n_{C_2} = tn_{C_1}$, due to the uniform distribution of the edges, we have $e_{C_2in} = t^2e_{C_1in}, e_{C_2out} = te_{C_1out}$. And if we consider $C_1$ and $C_2$ together as a community, denoted as $C_{12}$, we have $n_{C_{12}} = (t+1)n_{C_1}, e_{C_{12}in} = (t^2+1)e_{C_1in} + e_{12}, e_{C_{12}out} = (t+1)e_{C_1out} - 2e_{12}$.

HICODE uses the Louvain method to detect the partition of a layer, which is based on the modularity maximization.

\begin{equation}
    Q'(l) = \sum_i Q'(C_i) = \sum_i \left(\frac{e_{C_iin}}{e} - \left(\frac{d_{C_i}}{2e}\right)^2\right).
\end{equation}

For $C_1$ and $C_2$, no matter how they are connected with other communities, as long as the modularity of $C_{12}$ is higher than the modularity summation of $C_1$ and $C_2$, \ie, $Q'(C_{12}) > Q'(C_1) + Q'(C_2)$, the two communities cannot be detected separately by HICODE.

When $Q'(C_{12}) > Q'(C_1) + Q'(C_2)$, we have:
\begin{equation}
\begin{aligned}
    & Q'(C_{12}) - (Q'(C_1) + Q'(C_2)) > 0 \\
    & \iff \left(\frac{e_{C_{12}in}}{e} - \left(\frac{d_{C_{12}}}{2e}\right)^2\right) - \left(\left(\frac{e_{C_1in}}{e} - \left(\frac{d_{C_1}}{2e}\right)^2\right) +      \left(\frac{e_{C_2in}}{e} - \left(\frac{d_{C_2}}{2e}\right)^2\right)\right) > 0
    \\
    & \iff \left(\frac{(t^2+1)e_{C_1in} + e_{12}}{e} - \left(\frac{(2t^2+2)e_{C_1in}+(t+1)e_{C_1out}}{2e}\right)^2\right)\\
    & - \left(\left(\frac{e_{C_1in}}{e} - \left(\frac{2e_{C_1in}+e_{C_1out}}{2e}\right)^2\right) +       \left(\frac{t^2e_{C_1in}}{e} - \left(\frac{2t^2e_{C_1in}+te_{C_1out}}{2e}\right)^2\right)\right) > 0
    \\
    & \iff \left(\frac{(t^2+1)e_{1in}}{e} - \frac{(4t^2+4)e_{1in}^2+(4t^3+4)e_{1in}e_{1out}+(t^2+1)e_{1out}}{4e^2}\right) \\
    & - \left(\frac{(t^2+1)e_{1in}+e_{12}}{e} - 
    \frac{(4t^4+8t^2+4)e_{1in}^2+(4t^3+4t^2+4t+4)e_{1in}e_{1out}+(t^2+2t+1)^2e_{1out}^2}{4e^2}\right) > 0 \\
    & \iff ee_{12} > 2t^2e_{1in}+(t^2+t)e_{1in}e_{1out}+\frac{1}{2}te_{1out}^2.
\end{aligned}
\label{eq:inequ3}
\end{equation}

On the contrary, we want the weighted local modularity of $C_1$ to be higher than that of $C_{12}$, \ie, $Q(C_1) > Q(C_{12})$, so that IRLS would not combine $C_1$ and $C_2$ and truncate the right community.

When $Q(C_1) > Q(C_{12})$, we have:
\begin{equation}
    \begin{aligned}
    & Q(C_1) - Q(C_{12}) > 0 \\
    & \iff \frac{\frac{e_{C_1in}}{e} - \left(\frac{d_{C_1}}{2e}\right)^2}{n_{C_1}}
    - \frac{\frac{e_{C_{12}in}}{e} - \left(\frac{d_{C_{12}}}{2e}\right)^2}{n_{C_{12}}} > 0
    \\
    & \iff \frac{\frac{e_{C_1in}}{e} - \left(\frac{2e_{C_1in}+e_{C_1out}}{2e}\right)^2}{n_{c_1}}
     - \frac{\frac{(t^2+1)e_{C_1in} + e_{12}}{e} - \left(\frac{(2t^2+2)e_{C_1in}+(t+1)e_{C_1out}}{2e}\right)^2}{(t+1)n_{c_1}} > 0 \\
    & \iff \left(\frac{e_{1in}}{n_1e}-\frac{4e_{1in}^2+4e_{1in}e_{1out}+e_{1out}^2}{4n_1e^2}\right) \\
    & - \left(\frac{(t^2+1)e_{1in}+e_{12}}{(t+1)n_1e} - 
    \frac{(4t^4+8t^2+4)e_{1in}^2+(4t^3+4t^2+4t+4)e_{1in}e_{1out}+(t^2+2t+1)^2e_{1out}^2}{4(t+1)n_1e^2}\right) > 0 \\
    & \iff ee_{12} < (t^4+2t^2-t)e_{1in}^2 + (t^3+t^2)e_{1in}e_{1out}+\left(\frac{1}{4}t^2+\frac{1}{4}t\right)e_{1out}^2+(t-t^2)ee_{1in}.
    \end{aligned}
\label{eq:inequ4}
\end{equation}

Let $b_l$ and $p_l$ respectively denote the number of communities and the probability of forming edges in layer $l$. Since the edges counted by $e_{12}$ and $e_{1out}$ are both generated in layer 2, we have $e_{12} = \frac{tp_2n_1^2}{b_2}, e_{1out} = \frac{p_2n_1(n-n_1)}{b_2}$. So $e_{12}$ can be represented by $e_{1out}$:
\begin{equation}
    e_{12} = \frac{tn_1}{n-n_1}\cdot e_{1out}.
\label{eq:edges}
\end{equation}
Combining Eq. \ref{eq:inequ3} - \ref{eq:inequ4} and Eq. \ref{eq:edges}, we can obtain the two conditions that $t$ needs to satisfy:
\begin{equation*}
    (2e_{1in}^2 + e_{1in}e_{1out})t + \left(e_{1in}e_{1out}+\frac{1}{2}e_{1out}^2-\frac{n_1ee_{1out}}{n-n_1}\right) < 0,
\end{equation*}
\begin{equation*}
    e_{1in}^2t^3 + (e_{1in}e_{1out}-ee_{1in})t^2 + \left(2e_{1in}^2+e_{1in}e_{1out} + \frac{1}{4}e_{1out}^2+ee_{1out}\right)t + \left(\frac{1}{4}e_{1out}^2-e_{1in}^2-\frac{n_1ee_{1out}}{n-n_1}\right) > 0.
\end{equation*}

We use the online calculator Wolfram|Alpha \footnote{https://www.wolframalpha.com/} to solve the above two inequalities, 
and we omit the detailed calculations here. 
Eq. \ref{eq:inequ1} is a linear inequality and easy to solve. Eq. \ref{eq:inequ2} is a cubic inequality, whose corresponding equation has a real root and two imaginary roots, so $t$ only needs to be greater than the real root to satisfy the inequality. 

\end{proof}

In~\cite{bao2020hidden}, the authors prove that for a two-layer block model network, the modularity of a layer increases if we apply ReduceWeight on all communities in the other layer. Now we show the different influences that the two situations mentioned in Theorem \ref{thm1} have to the followup detection.

\begin{thm}
The modularity increment of layer 2 after addressing ReduceWeight when two communities in layer 1 are considered as one is lower than that when the two communities are detected separately.
\label{thm2}
\end{thm}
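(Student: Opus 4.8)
The plan is to reduce the comparison of the two modularity increments to a comparison of two post-reduction values of the layer-2 modularity. Before ReduceWeight is applied the layer-2 modularity is the same quantity in both scenarios, so the increment is larger in whichever scenario yields the larger layer-2 modularity \emph{after} the layer-1 communities are weakened by \eqref{eq:p}--\eqref{eq:weaken}. First I would hold the partition of layer 1 fixed everywhere except on $C_1\cup C_2$, so that the only difference between the HICODE (merged) case and the IRLS (separate) case is whether ReduceWeight is run on the single block $C_{12}$ or on the two blocks $C_1$ and $C_2$. Every other layer-1 community is weakened identically in both cases and hence contributes equally to layer 2; this isolates exactly the term whose sign must be determined.

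The structural fact that drives the result is that the $e_{12}$ edges joining $C_1$ and $C_2$ are generated entirely in layer 2. Since the block model only places layer-1 edges inside a single layer-1 community, any edge between the two distinct layer-1 blocks $C_1$ and $C_2$ must come from layer 2, and to exist in layer 2 its endpoints lie in a common layer-2 community; thus the $e_{12}$ edges are genuine \emph{interior} edges of layer-2 communities. In the merged case these edges are interior to $C_{12}$ and are scaled by $q_{12}/p_{12}<1$ from \eqref{eq:weaken}, strictly lowering the interior weight of the layer-2 communities that contain them, whereas in the separate case they are exterior to both $C_1$ and $C_2$ and are left untouched. Preserving this layer-2 signal is the mechanism by which the separate detection weakens layer 1 without collaterally weakening layer 2.

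Concretely, I would compute the densities $p_1,q_1$, $p_2,q_2$ and $p_{12},q_{12}$ from \eqref{eq:p} and \eqref{eq:q} using the uniform-distribution relations $e_{C_2in}=t^2e_{C_1in}$, $e_{C_2out}=te_{C_1out}$ and $e_{C_{12}in}=(t^2+1)e_{C_1in}+e_{12}$ already fixed in Theorem~\ref{thm1}, substitute the resulting reduction factors into the reduced interior weights and reduced degrees of each affected layer-2 community, and evaluate the layer-2 modularity on the reduced weights, i.e. $\sum_i\bigl(w_{C_iin}/w_G-(w_{C_i}/2w_G)^2\bigr)$ over the layer-2 communities. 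Forming the difference between the separate and merged values and eliminating $e_{12}$ via the identity $e_{12}=\tfrac{tn_1}{n-n_1}e_{1out}$ of \eqref{eq:edges} should leave an expression in $t$, $e_{1in}$, $e_{1out}$, $e$ whose positivity I would verify in the same regime that Theorem~\ref{thm1} identifies for the merging, with \cite{bao2020hidden} guaranteeing that the correctly-partitioned increment is itself positive so that only the ordering is in question.

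The main obstacle is that two effects are superposed rather than cleanly separated. The shared within-$C_1$ and within-$C_2$ edges are rescaled by the single factor $q_{12}/p_{12}$ in the merged case but by the two distinct factors $q_1/p_1$ and $q_2/p_2$ in the separate case; because $C_{12}$ absorbs the sparse cross-region between the blocks it is less dense than $C_1$, so this factor discrepancy tends to pull in the \emph{opposite} direction from the $e_{12}$ effect. Moreover the global normalizer $w_G$ shifts when the weights change, so the degree terms $(w_{C_i}/2w_G)^2$ cannot be compared naively. I therefore expect the decisive step to be showing that, after substituting the block-model densities and using \eqref{eq:edges}, the factor discrepancy on the within-community edges is dominated by the preservation of the $e_{12}$ layer-2 edges, so that the sign of the difference is ultimately controlled by the $e_{12}$ term and is positive.
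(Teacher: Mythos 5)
Your overall reduction (since the pre-reduction layer-2 modularity is common to both scenarios, it suffices to compare post-reduction values) and your structural insight (the $e_{12}$ edges between $C_1$ and $C_2$ are generated by layer 2, hence lie \emph{inside} layer-2 communities, and are weakened only in the merged case) are exactly the mechanism of the paper's proof: the cross regions $D_1,D_2$ of the merged block $C_{12}$ in the paper's block diagram are precisely your $e_{12}$ edges. The genuine gap is that your argument stops at the decisive point: you only \emph{expect} the $e_{12}$ effect to dominate the opposing factor discrepancy on the within-block edges, and you give no argument for that sign. The paper closes this with two observations you are missing. First, the total weight removed is the same in both scenarios (the paper's $(p_1-p_2)(t^2+1)n_1^2$, call it $\Delta W$), because in either case ReduceWeight lowers the density of the treated region(s) exactly to the noise level $p_2$; this disposes of your worry about the normalizer $w_G$ (it is identical after reduction in both cases) and turns the whole comparison into the question of how a \emph{fixed} total reduction is split between layer-2-interior and layer-2-outgoing weight. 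Second, introducing $r$, the fraction of within-block edges that are generated by layer 2, the split is explicit: separate detection removes $\Delta W\,r$ of layer-2-interior weight, while merged detection removes $\Delta W\bigl[\tfrac{2t}{(t+1)^2}+\tfrac{t^2+1}{(t+1)^2}\,r\bigr] > \Delta W\,r$ of it (and correspondingly less outgoing weight). So the competition you flag as the ``main obstacle'' resolves unconditionally in the model --- the merged case always shifts reduction from outgoing onto interior edges --- with no domination condition on $e_{1in}$, $e_{1out}$, $e$, $t$ required, and no need to eliminate $e_{12}$ via \eqref{eq:edges}.

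Even granting the weight bookkeeping, passing from ``separate detection reduces less interior and more outgoing layer-2 weight'' to ``separate detection yields higher layer-2 modularity'' is not automatic: the squared-degree terms $\left(\frac{w_{C_i}}{2w_G}\right)^2$ of the layer-2 communities must be controlled. This is exactly where the paper invokes Lemma~2 of \cite{bao2020hidden} (a weakening that removes a larger percentage of outgoing than of interior edges of a layer increases that layer's modularity), which converts the split comparison into the modularity ordering $Q'_1(l_2)>Q'_2(l_2)$ and hence $\Delta Q'_1(l_2)>\Delta Q'_2(l_2)$. Your plan replaces this lemma with brute-force algebra over all layer-2 communities; that could perhaps be made to work under the paper's symmetry assumptions, but it is precisely the step you leave unexecuted. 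Finally, your appeal to \cite{bao2020hidden} for the positivity of the correctly-partitioned increment is beside the point: the theorem asserts only an ordering of the two increments, for which positivity is neither needed nor sufficient.
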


\begin{figure}
    \centering
    \includegraphics[width=0.3\linewidth]{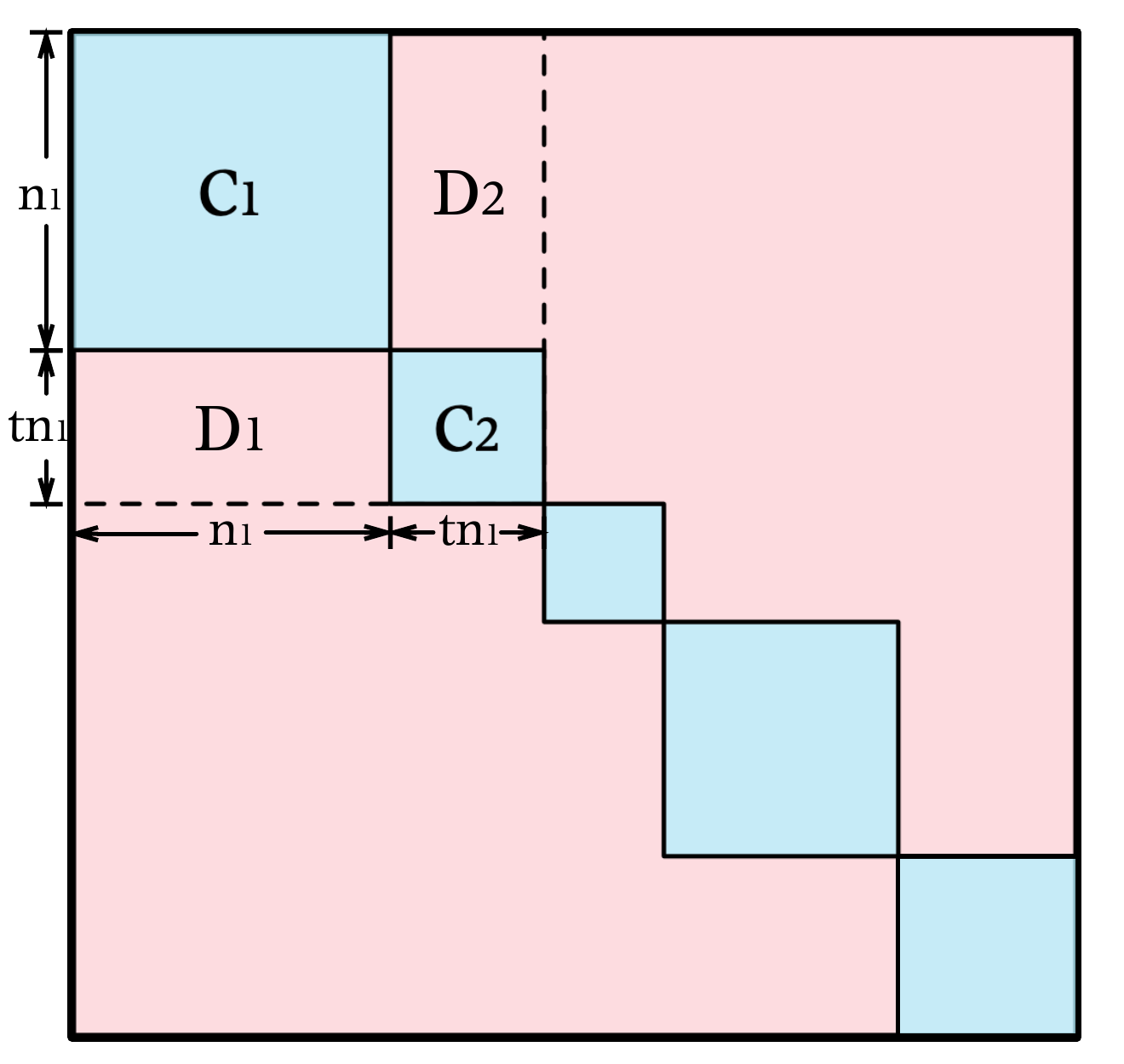}
    \caption{The blockmodel of the sampled subgraph from the perspective of layer 1. The probabilities of forming edges in the blue and red areas are $p_1$ and $p_2$, respectively.}
    \label{fig:block}
\end{figure}

\begin{proof}
In Figure \ref{fig:block}, we demonstrate the subgraph's blockmodel from the perspective of layer 1. The communities in layer 1 are represented by squares, whose observed probability of forming edges is $p_1$.
And for the rest part other than the squares, the edges generated by layer 2 are considered as background noises with probability $p_2$.

When $C_1$ and $C_2$ are detected separately, we use ReduceWeight strategy and multiply the interior edges' weights by $\frac{p_2}{p_1}$. The summation of the reduced edge weights in the two communities is:
\begin{equation}
    \Delta W^1 = (p_1-p_2)(t^2+1)n_1^2.
\end{equation}
Assuming that the edges generated by layer 2 account for $r$ ($0<r<1$) percent of all edges in $C_1$ and $C_2$, the edges inside the communities in layer 2 are reduced by:
\begin{equation}
    \Delta W_{2in}^1 = (p_1-p_2)(t^2+1)n_1^2 \cdot r.
\end{equation}
And the rest part of $\Delta W$ is considered as weakening the outgoing edges in layer 2:
\begin{equation}
    \Delta W_{2out}^1 = (p_1-p_2)(t^2+1)n_1^2 \cdot (1-r).
\end{equation}

When the two communities are detected incorrectly, \ie, $C_1$, $C_2$, $D_1$ and $D_2$ are regarded as a community $C_{12}$ together, the edges in all the four areas are weakened in proportion to their numbers of edges to keep the probabilities of $C_{12}$ at $p_2$. In this case, the interior reduction in layer 2 includes the alteration in $D_1$ and $D_2$ and $r$ percentage of that in $C_1$ and $C_2$:
\begin{equation}
\begin{aligned}
    \Delta W_{2in}^2 = & (p_1-p_2)(t^2+1)n_1^2 \cdot \frac{2tn_1^2}{(t+1)^2n_1^2}
     + (p_1-p_2)(t^2+1)n_1^2 \cdot \frac{(t^2+1)n_1^2}{(t+1)^2n_1^2} \cdot r.
\end{aligned}
\end{equation}
And the reduction of outgoing edges in layer 2 is:
\begin{equation}
    \Delta W_{2out}^2 = (p_1-p_2)(t^2+1)n_1^2 \cdot \frac{(t^2+1)n_1^2}{(t+1)^2n_1^2} \cdot (1-r).
\end{equation}
Evidently, $\Delta W_{2in}^1<\Delta W_{2in}^2, \Delta W_{2out}^1>\Delta W_{2out}^2$. According to Lemma 2 in ~\cite{bao2020hidden}, if the layer weakening method reduces a bigger percentage of outgoing edges than internal edges in layer $l$, the modularity of layer $l$ increases after the weakening. Therefore, the modularity of layer 2 after weakening when $C_1$ and $C_2$ are detected separately is higher than that when the two communities are regarded as one:
\begin{equation}
    Q'_1(l_2)>Q'_2(l_2).
\end{equation}
And $\Delta Q'_1(l_2) = Q'_1(l_2) - Q'_0(l_2), \Delta Q'_2(l_2) = Q'_2(l_2) - Q'_0(l_2)$, where $ Q'_0(l_2)$ is the original modularity of layer 2. Hence, we have: 
\begin{equation}
    \Delta Q'_1(l_2)>\Delta Q'_2(l_2).
\end{equation}
\end{proof}

\section{Experimental Results} 
\label{exp}
We evaluate the performance of IRLS on both synthetic datasets and real world networks.
As IRLS is the first method to conduct the local hidden community detection, we select methods from related problem categories as the baselines: a global hidden community detection algorithm, HICODE~\cite{he2018hidden} and a multiple local community detection algorithm, M-LOSP~\cite{he2015detecting}. All experiments are conducted on a machine with two Intel Xeon CPUs at 2.3GHZ and 256GB main memory.

For a seed node, we adopt the sampling process described in Section \ref{local} to generate a subgraph, in which the number of BFS steps is set to three. Based on the observation that the community distribution in real-world networks is more compact than those generated randomly, the size threshold of synthetic datasets and real-world datasets are set 10,000 and 5,000, respectively. In IRLS, we set $T = 10, \beta = 0.3, \gamma = 1.05; n_{\text{com}} = 150, n_{\text{set}} = 18$ for synthetic datasets and $n_{\text{com}} = 500, n_{\text{set}} = 9$ for real-world datasets (the sizes of communities in the real-world networks range from dozens to thousands of 
nodes). The other parameters of the modified LOSP are consistent with the original LOSP algorithm in~\cite{he2019krylov}. We name the two types of IRLS methods truncated by ground truth size or weighted local modularity as IRLS-gt and IRLS-auto, respectively. For HICODE, we assume the number of layers is given, so we run its refinement stage only, and choose ReduceWeight as the weakening strategy. Other settings for the two baselines keep default or are set to the values with the best performance.

\subsection{Evaluation Metric}
We use F1-score as the evaluation metric to measure the performance of the algorithms. Given a ground truth community $C^*$ and a detected community $C$, the precision and recall are defined by:
\begin{equation*}
	precision = \frac{|C \cap C^*|}{|C|},
	recall = \frac{|C \cap C^*|}{|C^*|}.
\end{equation*}
Then, the F1-score is calculated as follows:
\begin{equation}
	F_1 = \frac{2 * precision * recall}{precision + recall}.
\end{equation}

Specifically, for the partitions produced by HICODE in the subgraph, we find the communities containing the seed node, and take them as the local communities in the corresponding layers. For M-LOSP, we detect local communities from $n_L$ largest seed sets, calculating their F1-scores with the ground truth communities in all layers, then assign them to the layers with the highest score. If more than one communities are assigned to a same layer, we take the one with the highest F1-score. If no community is assigned to a layer, the score of this layer is 0.

Additionally, to guarantee the validity of seed set augmentation process and the clear structure of multiple local communities, we take a node as a potential seed when its ground truth communities in all layers are of sizes higher than $n_{\text{set}}$, and contain at least one other node connecting with it. In each case, a seed is randomly selected from the potential seeds, and we run 100 cases for the four methods on all datasets and calculate the average results.

\subsection{Comparison on Synthetic Datasets}
As mentioned in Section \ref{reduce}, each layer of a network can be regarded as a stochastic blockmodel, with every community corresponding to a block. Based on this observation, we create six synthetic datasets with 30,000 nodes, which have two different types of community configurations.
Following the fact that the community sizes of real-world networks can be approximated by the power law distribution~\cite{lancichinetti2008benchmark}, we randomly pick each community's size between 30 to 100 from a power law with exponent $\tau=1$, determining the nodes' membership in each layer and generate SynL2\_1 with two layers and SynL3\_1 with three layers. For the other four synthetic networks, we specifically set the number of communities in each layer, and every node is randomly assigned to a community within a layer. Both situations that communities in different layers have various or roughly similar sizes are taken into consideration.

After forming the communities, all layers' respective $p$ values are set to produce a series of $G_C=(n_C, p)$ Erdos-Renyi random graphs for each block. In this way, we guarantee a network with multiple layer structures of various strengths. Additionally, considering  that the partitions cannot be completely independent from each other, we add some background noises by producing another Erdos-Renyi graph $G=(n, p_0)$ for the whole network. The statistic information of the synthetic datasets are shown in Table \ref{tab:synthetic}, where values for different layers are separated by ``;'' in one grid, and ``\# Communities'' indicates the number of communities in each layer.

\begin{table*}
	\scriptsize
	\renewcommand{\arraystretch}{1.3}
	\caption{Statistics on synthetic datasets.}
	\label{tab:synthetic}
	\centering
	\resizebox{\textwidth}{15mm}{
	\begin{tabular}{l l l l l l l l l}
		\toprule
		Dataset  & $n$    & $m$   & $n_L$ & Modularity         & \#Communities & $\overline{|C|}$ & $p$ & $p_0$\\
		\midrule
		SynL2\_1 & 30,000  & 885,288    & 2 & 0.271; 0.219        & 518; 509     & 57.92; 58.94   & 0.25; 0.20  & 0.001\\
		SynL2\_2 & 30,000 & 973,678   & 2 & 0.307; 0.229        & 600; 300   & 50; 100  & 0.40; 0.15  & 0.001\\
		SynL2\_3 & 30,000 & 897,522   & 2 & 0.300; 0.199        & 500; 500   & 60; 60   & 0.30; 0.20   & 0.001\\
		SynL3\_1 & 30,000  & 1,023,980    & 3 & 0.234; 0.185; 0.140 & 516; 524; 519    & 58.14; 57.25; 57.80 & 0.25; 0.20; 0.15 & 0.001\\
		SynL3\_2 & 30,000 & 1,004,598 & 3 & 0.225; 0.178; 0.148 & 1,000; 500; 300 & 30; 60; 100 & 0.50; 0.20; 0.10 & 0.001\\
		SynL2\_3 & 30,000 & 987,188   & 3 & 0.227; 0.181; 0.136 & 500; 500; 500 & 60; 60; 60 & 0.25; 0.20; 0.15 & 0.001\\
		\bottomrule
	\end{tabular}}
\end{table*}

\makeatletter
\def\hlinew#1{%
	\noalign{\ifnum0=`}\fi\hrule \@height #1 \futurelet
	\reserved@a\@xhline}
\makeatother

\begin{table*}
	\scriptsize
	\renewcommand{\arraystretch}{1.3}
	\caption{Comparison of F1-score on synthetic datasets.}
	\label{tab:SYNresults}
	\centering
	\resizebox{\textwidth}{15mm}{
	\scalebox{1}{
		\begin{tabular}{l| l l| l l| l l| l l}
			\hline
			\hlinew{0.35pt}
			\multirow{2}{*}{Dataset} & \multicolumn{2}{c|}{M-LOSP} & \multicolumn{2}{c|}{HICODE} & \multicolumn{2}{c|}{IRLS-auto}  & \multicolumn{2}{c}{IRLS-gt} \\
			\cline{2-9}
			& \multicolumn{1}{c}{Each layer}  & \multicolumn{1}{c|}{Mean}  & \multicolumn{1}{c}{Each layer}  & \multicolumn{1}{c|}{Mean} & \multicolumn{1}{c}{Each layer}  & \multicolumn{1}{c|}{Mean} & \multicolumn{1}{c}{Each layer}  & \multicolumn{1}{c}{Mean} \\
			\hline
			SynL2\_1	 & 0.760; 0.523  & 0.642  & 0.803; 0.806  & 0.804  & 0.918; 0.817  & \textbf{0.868}  & 0.947; 0.894  & 0.921  \\
			SynL2\_2	 & 0.832; 0.388  & 0.610  & 0.847; 0.910  & 0.878  & 0.991; 0.972  & \textbf{0.981}  & 0.999; 0.966  & 0.983  \\
			SynL2\_3	& 0.896; 0.387  & 0.642  & 0.841; 0.795  & 0.818  & 0.992; 0.898  & \textbf{0.945}   & 0.998; 0.941  & 0.970  \\
			SynL3\_1	 & 0.667; 0.352; 0.159  & 0.393  & 0.766; 0.681; 0.530  & 0.659  & 0.868; 0.760; 0.510  & \textbf{0.713}  & 0.853; 0.767; 0.603  & 0.741  \\
			SynL3\_2	 & 0.706; 0.305; 0.077  & 0.363  & 0.749; 0.776; 0.679  & 0.734	  & 0.984; 0.946; 0.622  & \textbf{0.851}   & 0.995; 0.934; 0.619  & 0.849  \\
			SynL3\_3	 & 0.721; 0.423; 0.059  & 0.190  & 0.832; 0.770; 0.609  & 0.737  & 0.965; 0.938; 0.730  & \textbf{0.878}	 & 0.955; 0.949; 0.713  & 0.872  \\
			\hlinew{0.7pt}
	\end{tabular}}}
\end{table*}

\begin{figure}
	\centering
	\includegraphics[width=0.8\linewidth]{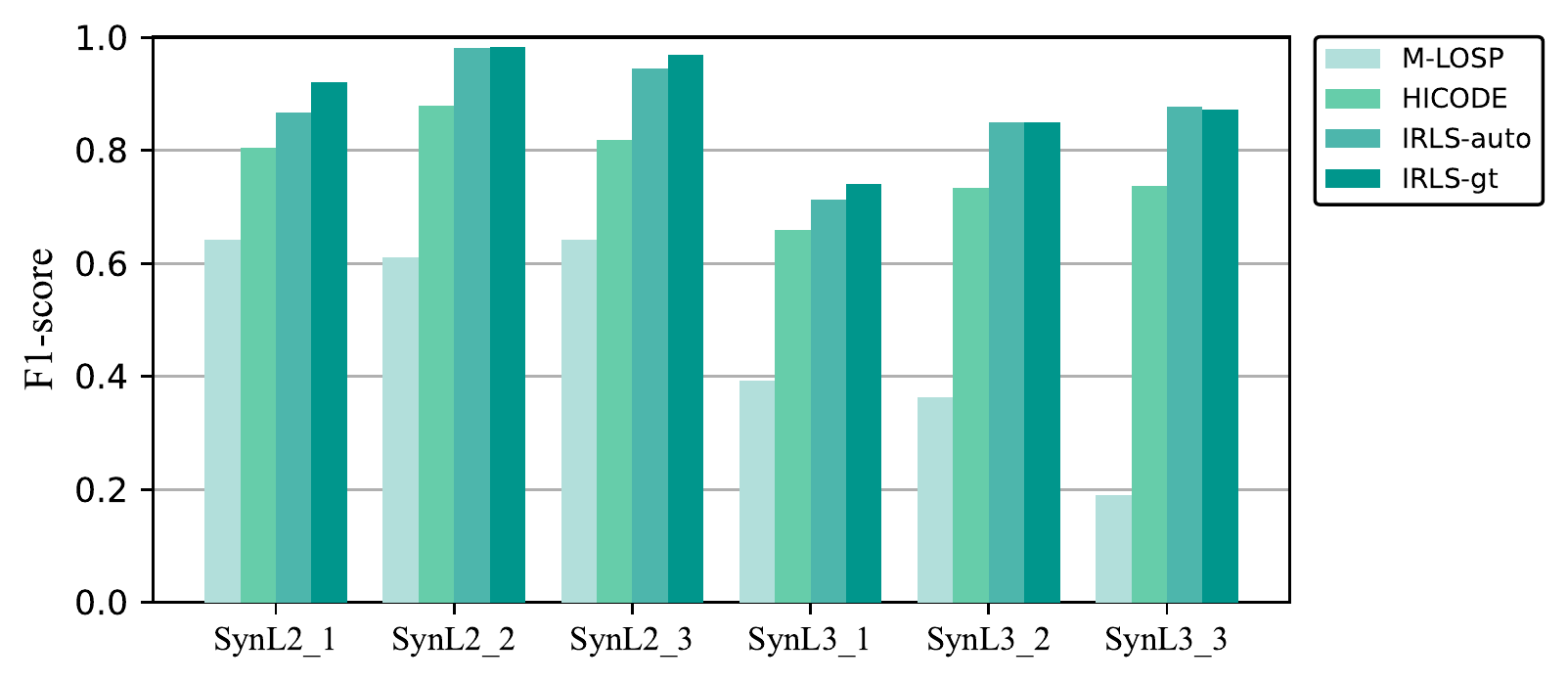}
	\caption{Comparison of the mean F1-score on synthetic datasets.}
	\label{fig:syn}
\end{figure}

The F1-scores between the ground truth communities and the detected communities by the four methods are shown in Table \ref{tab:SYNresults}, including the individual scores of each layer and the mean values.
Since the IRLS-gt can only be performed on the ideal condition, we mark the best results in bold among the other three methods that determine the community size automatically. We also plot histograms of the mean F1-scores in Figure \ref{fig:syn}. IRLS-gt and IRLS-auto both have good accuracy when conducted on the synthetic datasets, outperforming the two baselines. Due to the comparatively simple network structures of the two layers, the F1-scores on SynL2\_1, SynL2\_2 and SynL3\_3 are higher than that on the other three-layer datasets. And the detection results on SynL2\_1 and SynL3\_1 illustrate that the communities in networks approximated by the power law distribution are more difficult to be mined.
We observe that IRLS-gt outperforms the two baselines significantly on the detection accuracy. 

\subsection{Comparison on Real-world Datasets}
Rice\_Grad and Rice\_Ugrad are two social network datasets presented in~\cite{mislove2010you}, containing two sets of Facebook users of graduate and undergraduate students from Rice university and their connections, collected on May 17th, 2008. Each student has attributes of matriculation year, residential college and department, \etc. When a partition divided by an attribute is of comparatively high modularity, it can be considered as a layer of the network, where the connected components of the students with the same values form the communities. We extract four more networks from Facebook100 datasets~\cite{traud2012social}: MSU, FSU, Maryland and Texas, and use two attributes jointly to determine the layers and communities due to the larger scale of these networks. Students with chosen attributes missed are removed from the networks. The detailed statistics of the real-world datasets are shown in Table \ref{tab:realworld}.

\begin{table*}
	\scriptsize
	\renewcommand{\arraystretch}{1.3}
	\caption{Statistics on real-world datasets.}
	\label{tab:realworld}
	\centering
	\resizebox{\textwidth}{15mm}{
	\begin{threeparttable}
		\begin{tabular}{l l l l l l l l}
			\toprule
			Dataset  & $n$    & $m$   & $n_L$ & Chosen attributes & Modularity   & \#Communities & $\overline{|C|}$ \\
			\midrule
			Rice\_Grad      & 503 & 3,256 & 3 & department; school; year & 0.588; 0.584; 0.200   & 28; 9; 15  & 16.86; 53.78; 29.60\\
			Rice\_Ugrad     & 1,220   & 43,208  & 2 & college; year            & 0.385; 0.259       & 10; 7         & 120.10; 171.00\\
			MSU       & 13,294 & 289,510 & 2 & year+high school; year+dorm        & 0.147; 0.126       & 1396; 749      & 7.23; 14.28\\
			FSU     & 11,570 & 304,778 & 2 & year+dorm; year+high school        & 0.102; 0.081       & 870; 1325      & 10.13; 6.20\\
			Maryland & 10,404 & 303,889 & 2 & year+dorm; year+high school       & 0.135; 0.087       & 464; 1266       & 19.91; 6.03\\
			Texas    & 10,327 & 254,763 & 2 & year+dorm; year+high school       & 0.140; 0.086       & 585; 1447       & 14.76; 5.25\\
			\bottomrule
		\end{tabular}
		
		\begin{tablenotes}
			\footnotesize
			\item * Communities containing only one node are ignored.
		\end{tablenotes}
	\end{threeparttable}}
\end{table*}

For real-world datasets, IRLS-gt yields the highest F1-scores on all the five datasets. If the actual sizes of the communities are unknown, the detection accuracy of IRLS-auto is a little affected, but still superior to the two baselines. And it even performs better than IRLS-gt on MSU. Not considering the varying cohesiveness of different layers, M-LOSP gives the lowest performance among the four methods. HICODE significantly outperforms the multiple local community detection algorithm, indicating the reduction method is necessary for the hidden community detection.
Yet HICODE is not comparable to our IRLS methods, which are specially designed for the local hidden community detection problem. Note that there are only hundreds or thousands of nodes in Rice\_Grad and Rice\_Ugrad, less than the number of sampling threshold. That is to say, HICODE actually works on the whole networks of the datasets, but it is still not able to attain the best detection results, strongly demonstrating the necessity of our method.

\makeatletter
\def\hlinew#1{%
	\noalign{\ifnum0=`}\fi\hrule \@height #1 \futurelet
	\reserved@a\@xhline}
\makeatother

\begin{table*}
	\scriptsize
	\renewcommand{\arraystretch}{1.3}
	\caption{Comparison of F1-score on real-world datasets.}
	\label{tab:RWresults}
	\centering
	\resizebox{\textwidth}{15mm}{
	\scalebox{1}{
		\begin{tabular}{l| l l| l l| l l| l l}
			\hline
			\hlinew{0.35pt}
			\multirow{2}{*}{Dataset} & \multicolumn{2}{c|}{M-LOSP} & \multicolumn{2}{c|}{HICODE} & \multicolumn{2}{c|}{IRLS-auto}  & \multicolumn{2}{c}{IRLS-gt} \\
			\cline{2-9}
			& \multicolumn{1}{c}{Each layer}  & \multicolumn{1}{c|}{Mean}  & \multicolumn{1}{c}{Each layer}  & \multicolumn{1}{c|}{mean} & \multicolumn{1}{c}{Each layer}  & \multicolumn{1}{c|}{Mean} & \multicolumn{1}{c}{Each layer}  & \multicolumn{1}{c}{Mean} \\
			\hline
			Rice\_Grad   & 0.543; 0.041; 0.018  & 0.201  & 0.537; 0.263; 0.100  & 0.300   & 0.502; 0.295; 0.139  & \textbf{0.312}  & 0.593; 0.461; 0.311  & 0.455\\
			Rice\_Ugrad  & 0.033; 0.412  & 0.223  & 0.547; 0.082  & 0.315  & 0.477; 0.227  & \textbf{0.352}  & 0.545; 0.372  & 0.458  \\
			MSU  & 0.236; 0.048  & 0.142  & 0.230; 0.214  & 0.222  & 0.460; 0.119  & \textbf{0.290}	 & 0.382; 0.124  & 0.253  
			\\
			FSU  & 0.051; 0.143  & 0.097  & 0.083; 0.130  & 0.106  & 0.119; 0.205  & \textbf{0.162}  & 0.154; 0.182  & 0.168  
			\\
			Maryland  & 0.084; 0.162  & 0.123  & 0.075; 0.220  & 0.148  & 0.121; 0.240  & \textbf{0.181}  & 0.127; 0.261  & 0.194  \\
			Texas  & 0.076; 0.160  & 0.118  & 0.085; 0.096  & 0.091  & 0.086; 0.168  & \textbf{0.127}	 & 0.153; 0.190  & 0.171  \\
			\hlinew{0.7pt}
	\end{tabular}}}

\end{table*}

\begin{figure}
	\centering
	\includegraphics[width=0.8\linewidth]{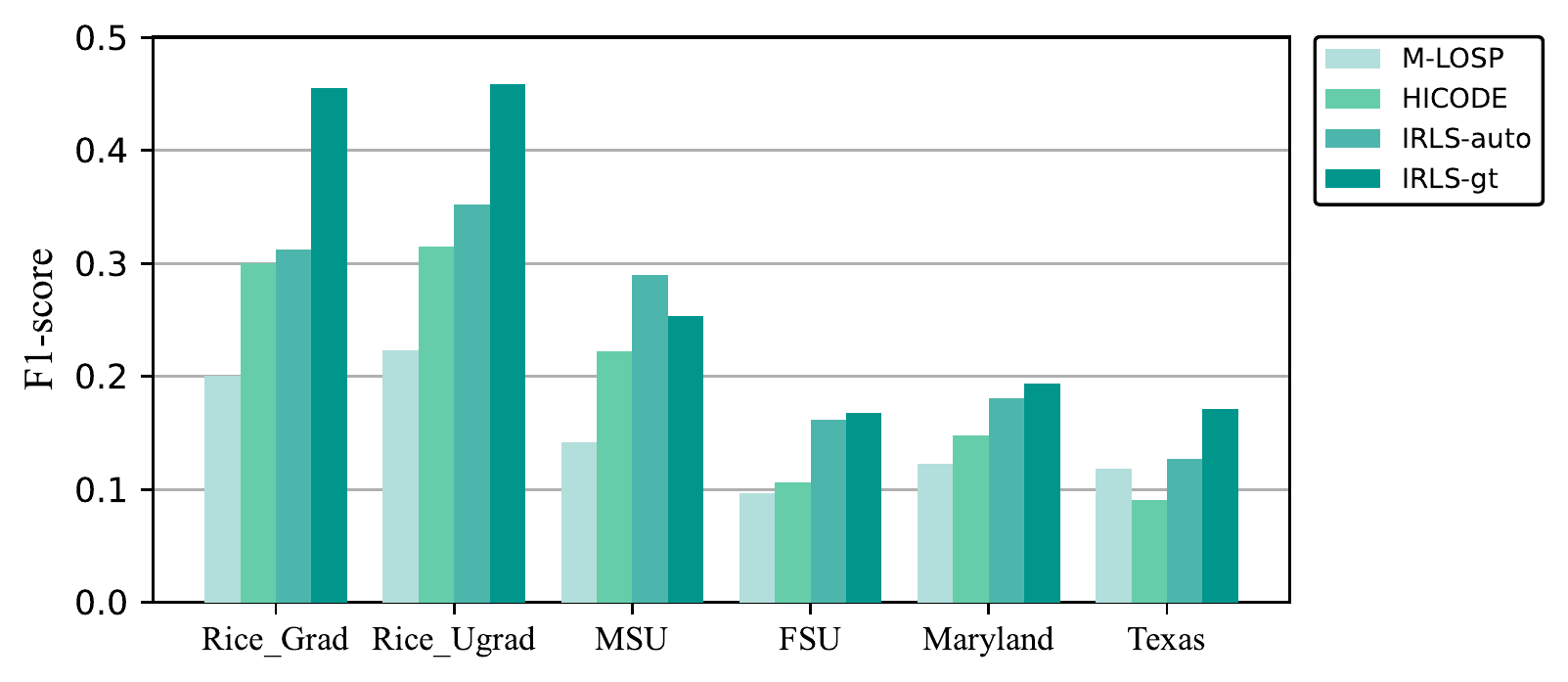}
	\caption{Comparison of the mean F1-score on real-world datasets.}
	\label{fig:real}
\end{figure}

\subsection{Evaluation on IRLS} \label{evalution}
In this subsection, we evaluate the number of iterations, $T$, and the maximal seed set size, $n_{\text{set}}$, to determine the appropriate values. IRLS uses an iterative refinement process to improve the detection accuracy. More iterations can lead to better results, while they bring higher calculation overhead in the meantime. We apply 20-iteration IRLS-gt and IRLS-auto on SynL3\_3, and the convergence line charts are shown in Figure \ref{fig:convgt} and Figure \ref{fig:convauto}. The two types of IRLS show similar convergence trends. According to the figures, the F1-scores of three layers increase rapidly at the first few iterations. Layer 1 and layer 2's detection results keep basically stable since the $10^{th}$ iteration, while layer 3's scores grow a little afterwards, which makes a tiny influence on the mean values. The experiments show similar or faster convergence on the other datasets. In order to obtain good detection accuracy without consuming too much time, we choose $T=10$ for all datasets.

\begin{figure}[htbp]
\centering
\begin{minipage}[t]{0.49\textwidth}
\centering
\includegraphics[width=6.5cm]{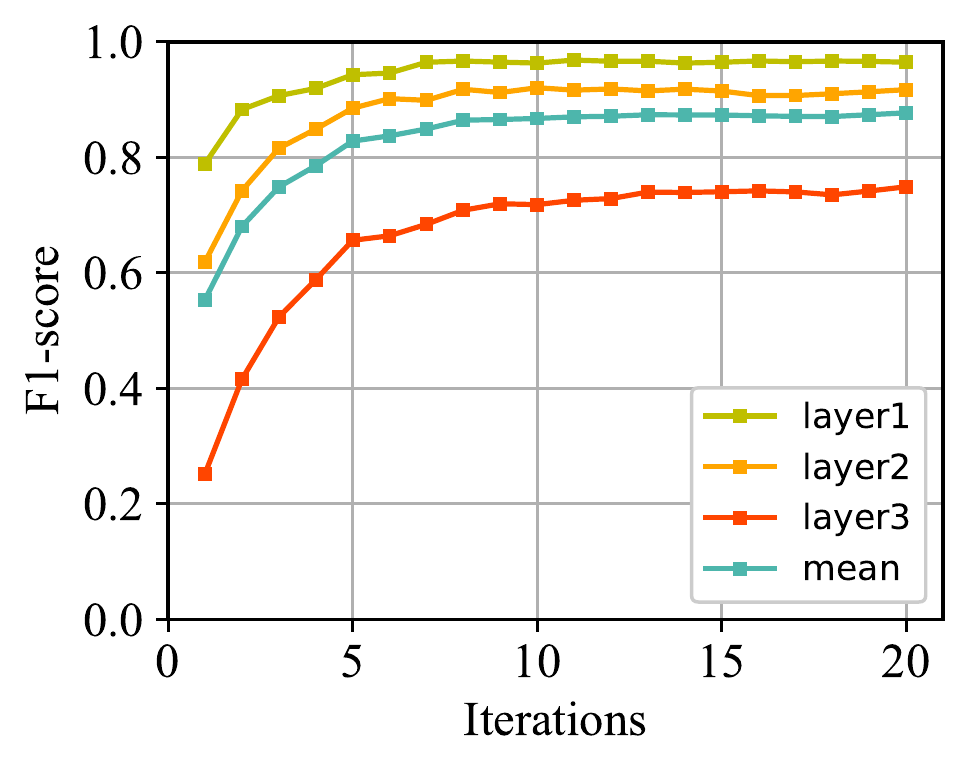}
\caption{Evaluation on the number of iterations $T$ of IRLS-gt.}
\label{fig:convgt}
\end{minipage}
\begin{minipage}[t]{0.49\textwidth}
\centering
\includegraphics[width=6.5cm]{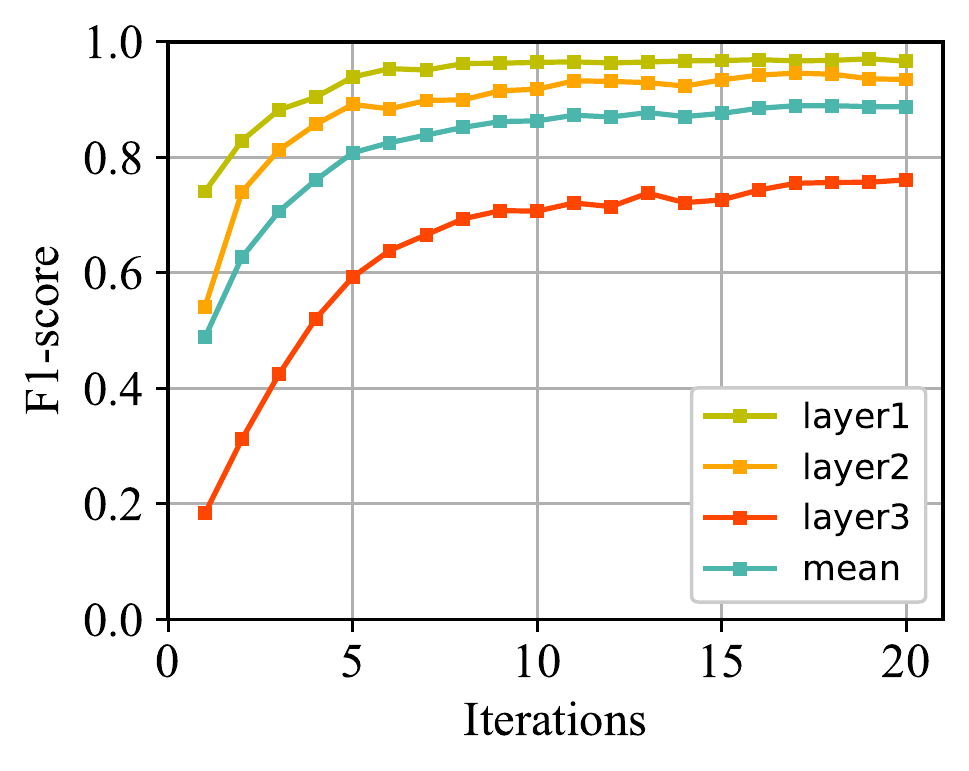}
\caption{Evaluation on the number of iterations $T$ of IRLS-auto.}
\label{fig:convauto}
\end{minipage}
\end{figure}

We also test two types of IRLS on SynL3\_3 with different maximal seed set sizes during the seed set augmentation stage, recording the corresponding F1-scores and running times (for each experiment, we calculate the average running time of each iteration), which are shown in Figure \ref{fig:seedF1} and Figure \ref{fig:seedTime}. On the one hand, a seed set with a tiny size may not have enough ability to find out the whole community. On the other hand, if the seed set is too large to include some nodes from other communities, it will also interfere with the detection accuracy. As we can see from Figure \ref{fig:seedF1}, two lines of F1-scores keep increasing at first and reach their peak values when the seed set size is 18. Benefiting from our efficient augmentation strategy, the seed set usually takes only few steps from a single node to the maximal size, and it will not cause much more time consumption when a larger size is chosen. We can draw the same conclusion from Figure \ref{fig:seedTime}. The running times of different seed set sizes do not make much difference. Taking both aspects into consideration, we set the maximal seed set size to 18 when conducting experiments on synthetic datasets. For real-world datasets, as there exist a number of communities of small sizes and they cannot be detected correctly when using a seed set containing more than ten nodes, we choose 9 as the maximal seed set size for these datasets.

\begin{figure}[htbp]
\centering
\begin{minipage}[t]{0.49\textwidth}
\centering
\includegraphics[width=6.5cm]{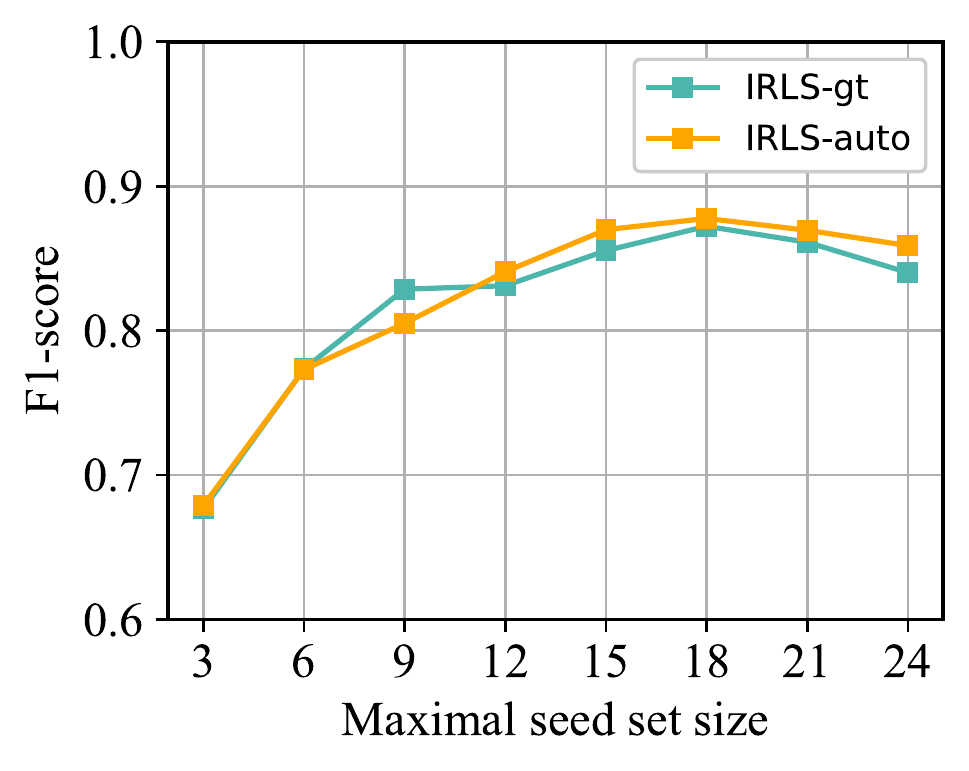}
\caption{Evaluation of the maximal seed set size on F1-score.}
\label{fig:seedF1}
\end{minipage}
\begin{minipage}[t]{0.49\textwidth}
\centering
\includegraphics[width=6.5cm]{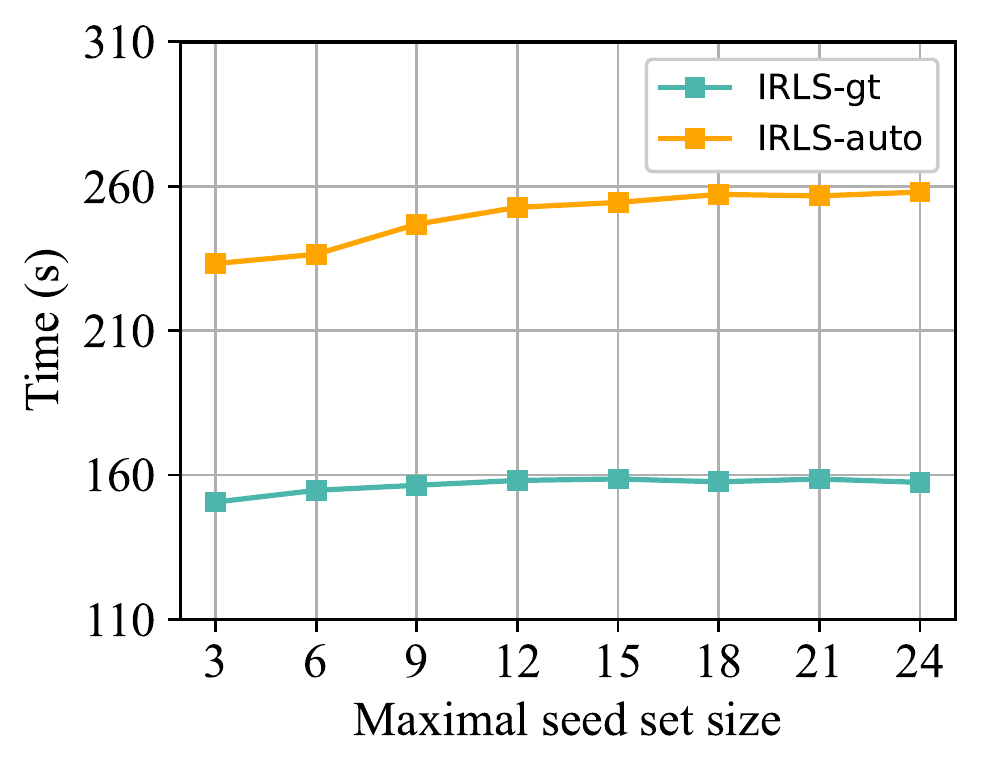}
\caption{Evaluation of the maximal seed set size on running time.}
\label{fig:seedTime}
\end{minipage}
\end{figure}

\section{Conclusion}

In this paper, we introduce a new problem called the local hidden community detection, which aims to find all the local communities containing the query seed node in social networks with multiple layers of different cohesiveness. We design a new algorithm called the Iterative Reduction Local Spectral (IRLS) method to address this problem. IRLS adds an adaptive seed set augmentation process to the local community detection, and uses the weighted local modularity to determine the boundary of a local community when its actual size is unknown. We apply the method on the sampled subgraph to uncover the local community. After that, nodes in the detected community and all edges connecting with them are temporarily removed from the network and we seek for a partition in the remaining subgraph. Then all the communities obtained are weakened to break the structure of the current layer and reveal the hidden layers. We repeat these steps until all the local communities in different layers are found and boosted. Both theoretical analysis and experimental results demonstrate the necessity and superiority of our method. Two types of IRLS methods are tested on synthetic datasets and real-world networks, showing great performance and surpassing the related baselines designed for multiple local community detection or global hidden community detection.

Local hidden community detection is a new research topic with more potential questions to explore. For example, how to deal with the situation that different nodes in the network belong to different numbers of communities? And how to distinguish layers with similar degrees of cohesiveness?
We will continue to follow this direction and endeavor to present more investigation in our future work. 

\section*{Acknowledgement}
This work is supported by National Natural Science Foundation of China (62076105).

\bibliographystyle{ACM-Reference-Format}
\bibliography{main.bbl}

\end{document}